\numberwithin{equation}{section}
\newtheorem{lem}{Lemma}[section]
\newtheorem{prop}[lem]{Proposition} 
\newtheorem{thm}[lem]{Theorem}
\newtheorem{cor}[lem]{Corollary}
\theoremstyle{definition}
\newtheorem{Def}[lem]{Definition} 
\newtheorem{rem}[lem]{Remark}
\newtheorem{ex}[lem]{Example}
\newtheorem*{thm*}{Theorem}
\title{Twistor Theory of Higher-Dimensional Black Holes --- Part~I: Theory}
\author[N Metzner]{Norman Metzner$^{\dagger}$}\thanks{$^{\dagger}$Mathematical Institute, University of Oxford, 24-29 St\,Giles', OX1 3LB Oxford, UK, and
St\,John's College, OX1 3JP, Oxford, UK}
\begin{document}


\addtolength{\jot}{0.1cm}			

\begin{abstract}
The correspondence between stationary, axisymmetric, asymptotically flat space-times and bundles over a reduced twistor space has been established in four dimensions. The main impediment for an application of this correspondence to examples in higher dimensions has been the lack of a higher-dimensional equivalent of the Ernst potential. This article will propose such a generalized Ernst potential, point out where the rod structure of the space-time can be found in the twistor picture and thereby provide a procedure for generating solutions to the Einstein equations in higher dimensions from the rod structure and other asymptotic data. An important result for the study of five-dimensional examples will be the theorem which relates the patching matrices on the outer semi-infinite rods.
\end{abstract}

\maketitle   
\tableofcontents

\section{Introduction} \label{sec:intro}
Based on the initial work by \cite{Ward:1983yg}, a correspondence between stationary, axisymmetric, asymptotically flat solutions of the Einstein equations in four 
dimension and bundles over twistor space was established; see for example \cite{Woodhouse:1988ek}, \cite{Fletcher:1990db}, \cite{Fletcher:1990aa}, \cite{Mason:1996hl}. The correspondence 
is based on a symmetry-reduced version of the Penrose-Ward transform (PW in the diagram below) and the Ernst equation, which provides a link between general relativity and anti-self-dual Yang-Mills (ASDYM) theory. The following diagram depicts it in a nutshell
\begin{equation} \label{diagr:maincorr} 
\hspace{-0.15cm}
\begin{xy}
  \xymatrix@!R@C=1.7cm{
      & \parbox{3.2cm}{{ASDYM connection for rank-$(n-2)$ vector bundle $B\to U$}} \ar@{<->}[r]^(0.615){\text{\tiny PW}} \ar[d]_(0.545){\txt{\tiny symmetry \\ \tiny reduction}} &  \parbox[c][0.9cm][c]{1.4cm}{{$B'\to \mathcal{P}$}} \ar[d]_(0.545){\txt{\tiny symmetry \\ \tiny reduction}} \\
      \parbox{2.9cm}{{stationary axi\-sym\-me\-tric space-time of di\-men\-sion $n$}} \ar@{<->}[r]^(0.48){\text{\tiny coincidence}} 	& \makebox[3.4cm][c]{{reduced ASDYM}} \ar@{<->}[r]	& \parbox[c][0.3cm][c]{1.4cm}{{\hspace{0.1cm} $E\to \mathcal{R}$}}     
  }
\end{xy}
\end{equation}

The idea of this article is to use this correspondence as an approach to the problem of black hole classification in five dimensions. Sections~\ref{sec:yang} and \ref{sec:bundles} will 
quickly review some of the intermediate steps in \eqref{diagr:maincorr} and describe the bundle $E\to \mathcal{R}$. We will see that a characterization of this bundle is 
possible by essentially specifying only one transition matrix, the so-called patching matrix. Even though the correspondence works in any number of space-time dimension, 
not all of the practically important structures can be immediately carried over from four to higher dimensions, in particular the Ernst potential. In Section~\ref{sec:fivedim} 
we therefore define a generalized Ernst potential which is obtained from the matrix $J$ (defined just before Equation (\ref{eq:redeinst2})) by a B\"acklund transformation. Furthermore, we point out the relation between the 
twistor bundle and the rod structure. 
The rod structure is a promising candidate for the additional data in black hole classification in higher dimensions. In 
Section~\ref{sec:bh} we review its definition and a visualization, and in Section~\ref{sec:fivedim} we will show where it can be found in the twistor picture. 

By Theorem~\ref{thm:invpmatrix} we will provide a key tool for the reconstruction of the patching matrix, hence the space-time metric, from the parameters mass, angular 
momenta and rod structure. In a follow-up to this article it will be shown how this method can be applied to various examples.

\section{ASDYM meets Einstein} \label{sec:yang}
One of the two steps in the correspondence \eqref{diagr:maincorr} is the coincidence that the ASDYM equations, reduced by symmetries, turn out to 
be equivalent to the Ernst equation for stationary axisymmetric gravitational fields in general relativity. This was originally discovered by \cite{Ward:1983yg,Witten:1979aa}.

From \cite[Sec.~6.6]{Mason:1996hl} and references therein, we learn that in flat space and with respect to the metric 
\begin{equation*}
\drm s^{2} = -\drm t^{2}+\drm z^{2}+\drm r^{2}+r^{2}\drm {\theta}^{2},
\end{equation*}
where the Killing vectors $∂_{t}$ and $∂_{\theta}$ represent a time-translational and rotational symmetry, the ASDYM field equations can be written as
\begin{equation} \label{eq:redyang}
r\partial _{z}(J^{-1}\partial _{z}J)+\partial _{r}(rJ^{-1}\partial _{r}J)=0.
\end{equation}
Here $J=J(r,z)$ is a matrix whose properties depend on the gauge group of the YM theory. Every solution to \eqref{eq:redyang}, which is called the \textit{reduced Yang's equation}, determines a stationary axisymmetric ASDYM field and every stationary axisymmetric ASDYM field can be obtained in that way. The \textit{Yang's matrix} $J$ determines the connection up to $J\mapsto A^{-1}JB$ with constant matrices $A$ and $B$.

\subsection{Reduction of Einstein Equations}
Now the question is how the reduced Yang's equation~\eqref{eq:redyang} arises from a reduction of the Einstein equations. 

Following \cite[Sec.~6.6]{Mason:1996hl}, we let $g_{ab}$ be a metric tensor in $n$ dimensions (real or complex), and $X_{i}^{a}$, $i=0,\ldots ,n-s-1$, be $n-s$ commuting Killing vectors that generate an orthogonally transitive isometry group with non-null $(n-s)$-dimensional orbits. This means the distribution of $s$-plane elements orthogonal to the orbits of $X_{i}$ is integrable, in other words $[U,V]$ is orthogonal to all $X_{i}$ whenever $U$ and $V$ are orthogonal to all $X_{i}$. 

If we define $J=(J_{ij})\coloneqq (g_{ab}^{\vphantom{1}}X_{i}^{a}X_{j}^{b})$, then a longer calculation shows that the Einstein vacuum equations are equivalent to
\begin{equation} \label{eq:redeinst2}
D_{a}(rJ^{-1}D^{a}J)=0,
\end{equation}
where $D_{a}$ is the covariant derivative on the quotient space by the Killing vectors, $\Sigma$, identified with any of the $s$-surfaces orthogonal to the orbits. 
Hence, the indices in \eqref{eq:redeinst2} run over $1,\ldots ,s$ and are lowered and raised with the metric (and its inverse) on $S$. The variable $r$ is defined by $-r^{2}=\det J$ and 
taking the trace of \eqref{eq:redeinst2} leads to $D^{2}r=0$, that is $r$ is harmonic on $\Sigma$. Let us assume from now on that the gradient of $r$ is not null.

Of particular importance to us is the case $s=2$ with Riemannian $\Sigma$, because then we may introduce isothermal coordinates, that is we can write the metric on $\Sigma$ in the form
\begin{equation*}
{\erm}^{2{\nu}}(\drm r^{2}+\drm z^{2})
\end{equation*}
where $z$ is the harmonic conjugate to $r$. As the Killing vectors commute, there exist coordinates $(y_{0},\ldots ,y_{n-3})$, where the $X_{i}$ are the first $n-2$ coordinate vector fields. Taking 
isothermal coordinates for the last two components, the full metric then has the form
\begin{equation} \label{eq:sigmametric}
\drm s^{2} = \sum _{i,j=0}^{n-3}J_{ij}\,\drm y^{i} \drm y^{j} + {\erm}^{2{\nu}} (\drm r^{2}+\drm z^{2}),
\end{equation}
Now \eqref{eq:redeinst2} reduces to \eqref{eq:redyang} and we obtain the following proposition.

\begin{prop}[Proposition 6.6.1 in \cite{Mason:1996hl}.]
Let $g_{ab}$ be a solution to Einstein's vacuum equation in $n$ dimensions. Suppose that it admits $n-2$ independent commuting Killing vectors generating an orthogonally transitive isometry group with non-null orbits, and that the gradient of $r$ is non-null. Then $J(r,z)$ is the Yang's matrix of a stationary axisymmetric solution to the ASDYM equation with gauge group $\mathrm{GL}(n-2,\mathbb{C})$.
\end{prop}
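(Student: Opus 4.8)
The plan is to push the general reduced equation \eqref{eq:redeinst2} down onto the two-dimensional orbit space and then recognise the result as the reduced Yang's equation \eqref{eq:redyang}. The hypothesis of $n-2$ independent commuting Killing vectors forces $n-s=n-2$, i.e.\ $s=2$, so the quotient $\Sigma$ is a surface. Taking $\Sigma$ Riemannian (the relevant real case; the complex case is formally identical) and using that $\nabla r$ is non-null, hence nowhere vanishing, the function $r$ — harmonic by the consequence $D^{2}r=0$ of the trace of \eqref{eq:redeinst2} — can be completed by its harmonic conjugate $z$ to an isothermal coordinate system in which the metric on $\Sigma$ is $\erm^{2\nu}(\drm r^{2}+\drm z^{2})$ and the full metric takes the form \eqref{eq:sigmametric}. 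In these coordinates $J=(J_{ij})=(g_{ab}X_{i}^{a}X_{j}^{b})$ is an $(n-2)\times(n-2)$ matrix-valued function of $(r,z)$ alone, symmetric, and invertible away from the axis since $\det J=-r^{2}$.

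The only computation needed is the two-dimensional fact that the operator in \eqref{eq:redeinst2} is insensitive to the conformal factor: for $h_{ab}=\erm^{2\nu}\delta_{ab}$ one has $\sqrt{\det h}\,h^{ab}=\delta^{ab}$, so for matrix-valued functions $\phi,\psi$ on $\Sigma$
\begin{equation*}
D_{a}\bigl(\phi\,D^{a}\psi\bigr)=\frac{1}{\sqrt{\det h}}\,\partial_{a}\bigl(\delta^{ab}\,\phi\,\partial_{b}\psi\bigr)=\erm^{-2\nu}\Bigl(\partial_{r}(\phi\,\partial_{r}\psi)+\partial_{z}(\phi\,\partial_{z}\psi)\Bigr).
\end{equation*}
Applying this with $\phi=rJ^{-1}$ and $\psi=J$, and dividing by the nowhere-zero factor $\erm^{-2\nu}$, equation \eqref{eq:redeinst2} becomes $\partial_{r}(rJ^{-1}\partial_{r}J)+\partial_{z}(rJ^{-1}\partial_{z}J)=0$; since $r$ is now a coordinate, $\partial_{z}r=0$, and this is precisely \eqref{eq:redyang}.

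It remains to identify the gauge group and quote the correspondence of Section~\ref{sec:yang}. The matrix $J$ is $(n-2)\times(n-2)$ and invertible with $\det J=-r^{2}$, hence takes values in $\mathrm{GL}(n-2,\mathbb{C})$ (for a real metric it is moreover real and symmetric, but these are reality and symmetry restrictions on a $\mathrm{GL}(n-2,\mathbb{C})$ Yang's matrix, not a reduction of the structure group), and it solves the reduced Yang's equation \eqref{eq:redyang}. By the statement recorded after \eqref{eq:redyang} — solutions of the reduced Yang's equation valued in a matrix group $G$ are exactly the Yang's matrices of stationary axisymmetric ASDYM fields with gauge group $G$ — the function $J$ built from $g_{ab}$ is such a Yang's matrix with $G=\mathrm{GL}(n-2,\mathbb{C})$, which is the assertion. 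I expect the only friction to be bookkeeping: making sure the $r$ defined intrinsically by $-r^{2}=\det J$ is the same $r$ that serves as isothermal coordinate (this is exactly where harmonicity of $r$ and non-nullity of $\nabla r$ are used), and checking that no residual $\erm^{2\nu}$ survives. The genuinely long step — that the vacuum equations are equivalent to \eqref{eq:redeinst2} in the first place — is the longer calculation quoted from \cite{Mason:1996hl} and is not reproved here.
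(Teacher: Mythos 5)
Your proposal is correct and follows the same route the paper takes: reduce the vacuum equations to \eqref{eq:redeinst2}, pass to isothermal coordinates on the two-dimensional quotient using harmonicity of $r$ and non-nullity of its gradient, observe that the conformal factor drops out of the divergence operator so that \eqref{eq:redeinst2} becomes \eqref{eq:redyang}, and then invoke the correspondence between solutions of the reduced Yang's equation and stationary axisymmetric ASDYM fields with gauge group $\mathrm{GL}(n-2,\mathbb{C})$. The paper compresses all of this into ``Now \eqref{eq:redeinst2} reduces to \eqref{eq:redyang}'' with a citation to \cite{Mason:1996hl}; your explicit conformal-invariance computation is exactly the step being glossed over there.
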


The partial converse of this Proposition yields a technique for solving Einstein's vacuum equations as follows. Any real solution $J(r,z)$ to \eqref{eq:redyang} such that
\begin{enumerate}[(a)]
\item $\det J=-r^{2}$,
\item $J$ is symmetric
\end{enumerate}
determines a solution to the Einstein vacuum equations, because we can reconstruct the metric from $J$ and ${\erm}^{2{\nu}}$ via \eqref{eq:sigmametric}, and then \eqref{eq:redyang} is equivalent to the vanishing of the components of $R_{ab}$ along the Killing vectors. The remaining components of the vacuum equations can be written as
\begin{equation} \label{eq:redeinst3}
2 \irm \partial _{{\xi}}\left(\log\left(r{\erm}^{2ν}\right)\right)=r\mathrm{tr} \left(\partial_{\xi}\left(J^{-1}\right) \partial_{\xi} J\right),
\end{equation}
with ${\xi}=z+\irm r$, together with the complex conjugate equation 
(see \cite[App.~D, Eq.~(D9)]{Harmark:2004rm}). These equations are automatically integrable if \eqref{eq:redyang} is satisfied and under the constraint $\det J=-r^{2}$, and they determine ${\erm}^{{\nu}}$ 
up to a multiplicative constant. The constraint on $\det J$, however, is not significant for the following reason. We know that in polar coordinates $u=c \log r + \log d$ for constants $c$ and $d$ 
is a solution to the (axisymmetric) Laplace equation
\begin{equation} \label{eq:axilap}
\partial_{r}^{\vphantom{2}}(r\partial_{r}^{\vphantom{2}}u)+r\partial _{x}^{2}u=0,
\end{equation}
so suppose $J$ is a solution to \eqref{eq:redyang}, and consider $\erm^{u}J=dr^{c}J$. Substituting this new matrix in \eqref{eq:redyang}, we see that it is again a solution of the 
reduced Yang's equation. The determinant constraint can thus be satisfied by an appropriate choice of the constants, since we have
\begin{equation*}
\det(\erm^{u}J)=\erm^{(n-2)u}\det J=d^{n-2} r^{(n-2)c}\det J.
\end{equation*}
The condition $J=J^{\mathrm{t}}$ is a further $\mathbb{Z}_{2}$ symmetry imposed on the ASD connection. 


Considering the case $n-2=s=2$, we can write the metric in canonical Weyl coordinates
\begin{equation*}
\drm s^{2} = -f(\drm t-{\alpha} \,\drm {\theta})^{2}+f^{-1}r^{2}\,\drm{\theta}^{2}+{\erm}^{2{\nu}}(\drm r^{2}+\drm z^{2}),
\end{equation*}
from which Yang's matrix $J$ can be read off as
\begin{equation} \label{eq:backldec4d}
J=\left(\begin{array}{cc}
-f{\alpha}^2+r^2f^{-1} & f{\alpha} \\
f{\alpha} & -f
\end{array}\right),
\end{equation}
with $f$ and ${\alpha}$ functions of $z$ and $r$. Then the second component of the reduced Yang's equation is an integrability condition for a function ${\psi}$ with 
\begin{equation} \label{eq:intweyl}
r\partial _{z}{\psi}+f^{2}\partial _{r}{\alpha}=0, \quad r \partial _{r}{\psi}-f^{2}\partial _{z}{\alpha}=0. 
\end{equation}
If we now consider the matrix
\begin{equation} \label{eq:ernstpot4d}
J'=\frac{1}{f} \left(\begin{array}{cc}
{\psi}^2+f^2 & {\psi} \\
{\psi} & 1
\end{array}\right),
\end{equation}
we find that $J'$ satisfies \eqref{eq:redyang} if and only if $J$ does. Solutions to Einstein's vacuum equations can 
therefore be obtained by solving \eqref{eq:redyang} for $J'$ subject to the 
conditions $\det J'=1$, $J'=J'^{\mathrm{t}}$. In this context \eqref{eq:redyang} is called the \textit{Ernst equation} and the complex function $\mathcal{E}=f+\irm {\psi}$ is the \textit{Ernst potential} \cite{Ernst:1968aa}, which is often taken as the basic variable in the analysis of stationary axisymmetric fields. We will also refer to $J'$ as the Ernst potential. 
\section{Bundles over Reduced Twistor Space} \label{sec:bundles}
The second link on which the correspondence in Figure~\ref{diagr:maincorr} hinges is the Penrose-Ward transform, by which a 
one-to-one correspondence between solutions of the ASDYM field equations and bundles over twistor space is established, see \cite[Thm.~10.2.1]{Mason:1996hl}. 
This section, which is based on \cite{Fletcher:1990db}, will describe the bundle structure which represents the solution of Einstein's equations. 

First we define the reduced version of the correspondence space as $\mathcal{F}_{\mathrm{r}}={\Sigma}\times \mathcal{X}$, where $\mathcal{X}$ is the Riemann sphere of ${\zeta}$ and ${\Sigma}$ is a two-dimensional 
complex conformal manifold (the complexification of $\Sigma$ in the previous section). The double fibration then becomes
\begin{equation*}
\begin{xy}
  \xymatrix@C=0.7cm@R=0.7cm{
      & \mathcal{F}_{\mathrm{r}} \ar[ld]_q \ar[rd]^p		&  \\
      {\Sigma}	 	&	&	\mathcal{R}     
  }
\end{xy}
\end{equation*}
Furthermore, we assume that on $Σ$ we are given a holomorphic solution $r$ of the Laplace equation and by $z$ we shall denote the harmonic conjugate of $r$.  The \textit{reduced twistor space} $\mathcal{R}$ associated with ${\Sigma}$ and $r$ is constructed from $\mathcal{F}_{\mathrm{r}}$ by identifying two points of $\mathcal{F}_{\mathrm{r}}$, $(σ,{\zeta})$ and $(σ',{\zeta}')$, if they lie on the same connected component of one of the surfaces given by
\begin{equation} \label{eq:quadw}
r{\zeta}^{2}+2(w-z){\zeta}-r=0
\end{equation}
for some value of $w$ and where $z=z(σ)$, $r=r(σ)$.

We can use $w$ as a local holomorphic coordinate on $\mathcal{R}$ and one can show that $\mathcal{R}$ is a non-Hausdorff Riemann surface covering $\mathbb{CP}^{1}$. As above, $w$ corresponds to one point of $\mathcal{R}$ if one can continuously change the roots of \eqref{eq:quadw} into each other by going on a path in ${\Sigma}$ and keeping $w$ fixed; and two points otherwise. Note that the condition for $w$ to correspond only to one point is an open condition in $\mathcal{R}$ as $z$ and $r$ are smooth functions on ${\Sigma}$. Now let $S$ be the $w$ Riemann sphere, and $V$ be the set of values for $w$ which correspond only to one point in $\mathcal{R}$. Then $V\subset S$ is open, and if ${\Sigma}$ is simply connected, then
\begin{equation} \label{eq:eqV}
V=\{z({\sigma})+\irm r({\sigma}): {\sigma}\in {\Sigma}\}. 
\end{equation}
In general, $V$ is not connected. However, for many solutions $J'$ is smooth and non-degenerate at the axis beyond $V$. Thus the region where the two spheres are identified can be enlarged to be simply connected such that this identification also extends to the fibres of $\hat E$.
\begin{Def}
An Ernst potential $J'$ is called \textit{axis-regular} if the corresponding bundle $E' \to  \mathcal{R}_{V}$ satisfies $E' = η^{*}(\hat E)$ where $\hat E$ is a bundle over $\mathcal{R}'=\mathcal{R}_{V'}$ such that $\hat E|_{S_{0}}$ and $\hat E|_{S_{1}}$ are trivial.

Here $\mathcal{R}'$ is a double cover of the $w$-Riemann sphere identified over the two copies of the set $V'$ where $V'$ is open, connected, simply connected, invariant under $w \mapsto  \bar w$ and $V \subseteq  V'$ ($V$ as in \eqref{eq:eqV}). The map $η: \mathcal{R}_{V}\to \mathcal{R}_{V'}$ is the projection.

We shall also say that a metric $J$ is axis-regular if the corresponding Ernst potential $J'$ is.
\end{Def}
Choose the copies of the two Riemann spheres in $\mathcal{R}'$ such that $\infty _{0}\in S_{0}$ and $\infty _{1}\in S_{1}$, see Figure~\ref{fig:redtwis}. 

If the bundle was not axis-regular, it meant that there are more than only isolated points where the two spheres cannot be identified. Thus, in the light of later results (Proposition~\ref{prop:AnalyCont}, Corollary~\ref{cor:singofP}) and \cite[App.~F]{Harmark:2004rm} axis-regularity is necessary for the space-time not to have curvature singularities at $r=0$.

\begin{figure}[htbp]
\begin{center}
     \scalebox{0.7}{\input{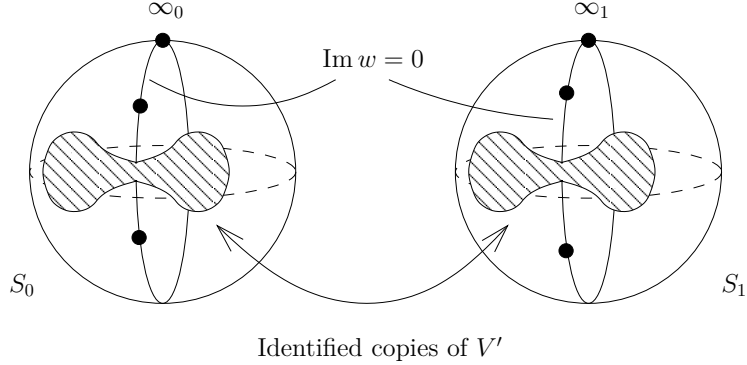}}
     \caption{Non-Hausdorff reduced twistor space with real poles (bullet points) as in the relevant examples.} 
     \label{fig:redtwis}
\end{center}
\end{figure}
Note that, whatever $x$ and $r$ are, $w=\infty $ will always correspond to two points, ${\zeta}=0$ and ${\zeta}=\infty $. Therefore, $w=\infty $ is never in $V'$ and we denote the points in $\mathcal{R}$ corresponding to $w=\infty $ by $\infty _{0}$ (for ${\zeta}=0$) and $\infty _{1}$ (for ${\zeta}=\infty $). The copies of the two Riemann spheres are chosen such that $\infty _{0}\in S_{0}$ and $\infty _{1}\in S_{1}$.  

In due course we will see that for the relevant examples there will altogether only be a finite number of such double points on the real axis $\{\infty ,w_{1},\ldots ,w_{n}\}$, which are the roots of
\begin{equation} \label{eq:rootswi}
r{\zeta}^{2}+2(w_{i}-z){\zeta}-r=0.
\end{equation}
In order to make the projection $p:\mathcal{F}_{\mathrm{r}} → \mathcal{R}$ well-defined we have to assign the roots ${\zeta}_{i}^{0}$, ${\zeta}_{i}^{1}$ to $S_{0}$, $S_{1}$, say ${\pi}({\zeta}_{i}^{0})\in S_{0}$ and ${\pi}({\zeta}_{i}^{1})\in S_{1}$.

Now we need to construct the reduced form of the Penrose-Ward transform. First the forward direction, where we are given $J$ as a solution of \eqref{eq:redyang}. In the unconstrained case the bundle over twistor space is defined by specifying what the fibres are. Here the holomorphic bundle $E\to \mathcal{R}$ is constructed by taking the fibre over a point of $\mathcal{R}$ to be the space of solutions of 
\begin{equation*}
\begin{split}
(\partial _{r}-{\sigma}\partial _{z}+r^{-1}{\sigma}\partial _{{\sigma}})s-{\sigma}(J^{-1}J_{z})s & = 0,\\
(\partial _{z}+{\sigma}\partial _{r}-r^{-1}{\sigma}^{2}\partial _{{\sigma}})s+{\sigma}(J^{-1}J_{r})s & = 0,
\end{split}
\end{equation*}
on the corresponding connected surface in $\mathcal{F}_{\mathrm{r}}$. The integrability condition for those two equations is precisely \eqref{eq:redyang}. 

Conversely, let $E\to \mathcal{R}$ be a holomorphic rank-$n$ vector bundle together with a choice of frame in the fibres. For a fixed ${\sigma}\in {\Sigma}$ let ${\pi}:\mathcal{X}\to \mathcal{R}$ be the restricted projection of $\mathcal{F}_{\mathrm{r}}\to \mathcal{R}$ to $\{{\sigma}\}\times \mathcal{X}$, that is the identification $({\sigma},{\zeta}) \sim ({\sigma}',{\zeta}')$ as above, and denote by ${\pi}^{*}(E)$ the pullback bundle of $E$ to $\mathcal{F}_{\mathrm{r}}$. We have to assume that ${\pi}^{*}(E)$ is a trivial holomorphic bundle over $\mathcal{X}$. This is less restrictive than it seems since if it is satisfied at one point ${\sigma}$ then it holds in a neighbourhood of ${\sigma}$.

The matrix $J$ can then be recovered within $J\mapsto AJB$, where $A$ and $B$ are constant, by a splitting procedure as follows. Suppose $E$ is  given by patching matrices $\{P_{{\alpha}{\beta}}(w)\}$ according to an open cover $\{\mathcal{R}_{{\alpha}}\}$ of $\mathcal{R}$ such that $\infty _{0}\in \mathcal{R}_{0}$ and $\infty _{1}\in \mathcal{R}_{1}$. Then ${\pi}^{*}(E)$ is given by patching matrices
\begin{equation} \label{eq:patmat1}
P_{{\alpha}{\beta}}(w({\sigma}))=P_{{\alpha}{\beta}}\left(\frac{1}{2}r({\sigma})({\zeta}^{-1}-{\zeta})+z({\sigma})\right)
\end{equation}
according to the open cover $\{{\pi}^{-1}(\mathcal{R}_{{\alpha}})\}$ of $\mathcal{X}$. The triviality assumption implies that there exist splitting matrices $f_{{\alpha}}({\zeta})$ such that
\begin{equation}\label{eq:patmat2}
P_{{\alpha}{\beta}}\left(\frac{1}{2}r({\sigma})({\zeta}^{-1}-{\zeta})+z({\sigma})\right)=f_{{\alpha}}^{\vphantom{-1}}({\zeta})f_{{\beta}}^{-1}({\zeta}).
\end{equation}
We define $J\coloneqq f_{0}(0)f_{1}(\infty )^{-1}$. Another splitting would be of the form $f_{{\alpha}}C$ for an invertible matrix $C$, which has to be holomorphic on the entire ${\zeta}$ Riemann sphere, thus $C$ is constant. But this leaves $J$ invariant and the definition is independent of the choice of splitting. The splitting matrices $f_{{\alpha}}$ depend smoothly on $r$, $z$ as ${\sigma}$ varies, so $J$ does. Although $J$ might have singularities where the triviality condition does not hold. It can be shown that a so-defined $J$ does indeed satisfy \eqref{eq:redyang}. 

It is important to note that, since the construction of $J$ relied on $π$, a different assignment of the roots of \eqref{eq:rootswi} to the Riemann spheres yields a different solution $J$. Yet, these different solutions are analytic continuations of each other and one can show that they are different parts of the Penrose diagram of the maximal analytic extension of the metric.

For the rest of this section we review results in the case $n-2=2$, that is $J$ is a $2\times 2$-matrix and $E$ a rank-2 vector bundle. An understanding of this is important for the generalization to higher dimensions. 

To characterize the bundle $E\to \mathcal{R}'$ in terms of patching matrices we choose a four-set open cover $\{U_{0},\ldots ,U_{3}\}$ of $\mathcal{R}'$ such that $U_{0}\cup U_{2}\supset S_{0}$ with $V'\subset U_{2}$ and $\infty _{0}\in U_{0}$, and $U_{1}\cup U_{3}\supset S_{1}$ with $V'\subset U_{3}$ and $\infty _{1}\in U_{1}$. Now we use the following theorem.
\begin{thm}[Grothendieck]
Let $E\to \mathbb{C}\mathbb{P}^{1}$ be a rank-$a$ vector bundle. Then
\begin{equation*}
E=L^{k_{1}}\oplus \ldots \oplus L^{k_{a}}=\mathcal{O}(-k_{1})\oplus \ldots \oplus \mathcal{O}(-k_{a})
\end{equation*}
for some integers $k_{1},\ldots ,k_{a}$ unique up to permutation. Here, $L^{k_{i}}=L^{\otimes k_{i}}$ with $L$ the tautological bundle.
\end{thm}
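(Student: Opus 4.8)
The plan is to establish existence of the splitting by induction on the rank $a$ using sheaf cohomology on $\mathbb{CP}^{1}$, and then to read off uniqueness from a cohomological dimension count. The base case $a=1$ is just the classification of holomorphic line bundles on $\mathbb{CP}^{1}$: $\mathrm{Pic}(\mathbb{CP}^{1})\cong\mathbb{Z}$, generated by $L=\mathcal{O}(-1)$.

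For the inductive step, write $E(j):=E\otimes\mathcal{O}(j)$. First I would check that $E$ admits a line subbundle of maximal degree. There is at least one line subbundle, since $H^{0}(E(N))\neq 0$ for $N\gg 0$ (Serre vanishing kills $H^{1}$, so $h^{0}=\chi=\deg E+aN+a>0$), and a nonzero section of $E(N)$ generates, after saturation, a line subbundle. And such degrees are bounded above: $E^{\vee}(N)$ is globally generated for $N\gg 0$, so dualising a surjection $\mathcal{O}^{\oplus M}\twoheadrightarrow E^{\vee}(N)$ yields an embedding $E\hookrightarrow\mathcal{O}(N)^{\oplus M}$, and any line subbundle maps nontrivially to some factor $\mathcal{O}(N)$, hence has degree $\le N$. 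So pick a line subbundle $\mathcal{O}(-k_{1})\hookrightarrow E$ of maximal degree $-k_{1}$. If the corresponding section vanished somewhere, dividing by a local coordinate there would produce a map $\mathcal{O}(1-k_{1})\to E$, i.e. a line subbundle of strictly larger degree — impossible. Hence the inclusion is a subbundle, the quotient $E'$ is locally free of rank $a-1$, and by the inductive hypothesis $E'\cong\bigoplus_{i=2}^{a}\mathcal{O}(-k_{i})$.

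The decisive step is that the extension $0\to\mathcal{O}(-k_{1})\to E\to E'\to 0$ splits. Its class lies in $\mathrm{Ext}^{1}(E',\mathcal{O}(-k_{1}))=\bigoplus_{i=2}^{a}H^{1}\big(\mathbb{CP}^{1},\mathcal{O}(k_{1}-k_{i})\big)$, and $H^{1}(\mathcal{O}(d))=0$ precisely when $d\ge -1$; so it suffices to show $k_{i}\ge k_{1}$ for all $i$. Suppose $k_{j}\le k_{1}-1$ for some $j$. Pulling the summand $\mathcal{O}(-k_{j})\subseteq E'$ back along $E\to E'$ gives a rank-two subsheaf $F\subseteq E$ with $0\to\mathcal{O}(-k_{1})\to F\to\mathcal{O}(-k_{j})\to 0$; twisting by $\mathcal{O}(k_{1}-1)$ and using $H^{0}(\mathcal{O}(-1))=H^{1}(\mathcal{O}(-1))=0$ gives $H^{0}(F(k_{1}-1))\cong H^{0}(\mathcal{O}(k_{1}-1-k_{j}))\neq 0$, so $F$, and hence $E$, contains a line subbundle of degree $\ge 1-k_{1}>-k_{1}$ — contradicting maximality. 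Therefore all $k_{i}\ge k_{1}$, the $\mathrm{Ext}^{1}$ group vanishes, the sequence splits, and $E\cong\bigoplus_{i=1}^{a}\mathcal{O}(-k_{i})$. For uniqueness, from $h^{0}(E(j))=\sum_{i}\max(0,j-k_{i}+1)$ the difference $h^{0}(E(j))-h^{0}(E(j-1))=\#\{i:k_{i}\le j\}$ is an invariant of $E$ whose jumps recover the multiplicity of each integer among the $k_{i}$; hence the multiset $\{k_{1},\dots,k_{a}\}$ is determined by $E$.

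I expect the main obstacle to be exactly the splitting step — equivalently, proving that the maximal-degree line subbundle is at least as positive as every Grothendieck summand of the quotient, which is where the maximality of $-k_{1}$ enters essentially; the rest is bookkeeping with Riemann--Roch on $\mathbb{CP}^{1}$. Alternatively one could argue entirely with transition functions: $E$ is described by a single holomorphic map $g:\mathbb{C}^{*}\to\mathrm{GL}(a,\mathbb{C})$ on the overlap of the two standard charts, and existence of the splitting is equivalent to a Birkhoff factorisation $g(\zeta)=h_{0}(\zeta)\,\mathrm{diag}(\zeta^{k_{1}},\dots,\zeta^{k_{a}})\,h_{1}(\zeta)^{-1}$ with $h_{0},h_{1}$ holomorphic and invertible on the two discs — a reformulation that fits the patching-matrix formalism used elsewhere in the paper and could replace the cohomological argument if an explicit matrix proof is preferred.
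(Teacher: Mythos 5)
The paper does not prove this statement: it is quoted as a classical theorem of Grothendieck (a black box imported from the literature, used only to put the restrictions $E|_{S_0}$, $E|_{S_1}$ into diagonal form), so there is no in-paper argument to compare yours against. Your proof is the standard and correct one --- induction on the rank via a line subbundle of maximal degree, splitting of the resulting extension by vanishing of $\mathrm{Ext}^1$, and recovery of the multiset $\{k_i\}$ from the jump function $j\mapsto h^0(E(j))-h^0(E(j-1))$ --- and all the individual steps (boundedness of degrees of line subbundles, saturation, the contradiction with maximality when some $k_j\le k_1-1$, the Riemann--Roch bookkeeping for uniqueness) are sound. One small transcription slip: with $E'=\bigoplus_{i\ge 2}\mathcal{O}(-k_i)$ you have $\mathrm{Ext}^1(E',\mathcal{O}(-k_1))=\bigoplus_i H^1\bigl(\mathbb{CP}^1,\mathcal{O}(k_i-k_1)\bigr)$, not $\mathcal{O}(k_1-k_i)$; the conclusion you draw (vanishing once $k_i\ge k_1$) is the one that follows from the correct twist, so the argument is unaffected. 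Your closing remark is apt for this paper in particular: the Birkhoff factorisation $g(\zeta)=h_0(\zeta)\,\mathrm{diag}(\zeta^{k_1},\dots,\zeta^{k_a})\,h_1(\zeta)^{-1}$ is exactly the form in which the theorem is actually deployed here, namely to normalise the patching matrices $P_{02}$ and $P_{13}$ to powers of $2w$, so an explicit transition-function proof would mesh most directly with the splitting procedure of Section~3.
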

Hence we can choose a trivialization such that $\left.E\right|_{S_{0}}=L^{p}\oplus L^{q}$ and $\left.E\right|_{S_{1}}=L^{p'}\oplus L^{q'}$, that is 
\begin{equation*}
\renewcommand{\arraystretch}{1.5}
P_{02}=\left(\begin{array}{cc}(2w)^{p} & 0 \\0 & (2w)^{q}\end{array}\right),\quad 
P_{13}=\left(\begin{array}{cc}(2w)^{p'} & 0 \\0 & (2w)^{q'}\end{array}\right), 
\end{equation*}
where we assume that without loss of generality $\{w=0\}\subset V'$ which can be achieved by adding a real constant to $w$. Now the triviality assumption and the symmetry imply that $p=-p'$ and $q=-q'$.

That reduces the patching data to two integers $p$, $q$ and a single holomorphic patching matrix $P(w)=P_{23}(w)$ defined for $w\in V'$. The remaining patching matrices are obtained by concatenation. 

From the interpretation of $J$ as the matrix of inner products of Killing vectors in general relativity, we require $J$ to be real and symmetric. It is not hard to see that $J$ is symmetric, if and only if $P$ is symmetric. Furthermore, $J$ is real if and only if $P$ is real in the sense that $\overline{P(w)}=P(\skew{1}{\bar}{w})$.

Moreover, for $n=2$ it can be shown that (see for example \cite{Woodhouse:1988ek}, but as cited here it is taken from \cite{Fletcher:1990db}):
\begin{itemize}
\item If $\det P=1$, then $\det J=\left(-r^{2}\right)^{p+q}$. 
\item If $J$ is obtained from an axis-regular space-time and if the definition of ${\pi}$ is such that ${\zeta}_{i}^{0}\to 0$ and ${\zeta}_{i}^{1}\to \infty $ for $r\to \infty $ and all $i$, then $p=1$, $q=0$ and $P(z)=J'(0,z)$ on the rotational axis or on the horizon. Here, $J'$ is the Ernst potential. Thus, $P$ is the analytic continuation of the boundary values of the Ernst potential.
\item If $J$ comes from an asymptotically flat space-time in the sense that its Ernst potential has the same asymptotic form as the Ernst potential of Minkowski space with rotation and translation as Killing vectors, then $P(\infty )=1$, and conversely.
\end{itemize}

These results can be used to look at an example.
\begin{ex}[The Kerr solution]
The patching data for the Kerr solution is (without proof)
\begin{equation*}
\renewcommand{\arraystretch}{1.5}
P(w)=\frac{1}{w^{2}-{\sigma}^{2}}\left(\begin{array}{cc}
(w+m)^{2}+a^{2} & 2am \\2am & (w-m)^{2}+a^{2}
\end{array}\right)
\end{equation*}
where ${\sigma}=\sqrt{m^{2}-a^{2}}$ for $a<m$. Axis-regularity implies $p=1$ and $q=0$. The open set $V'$ is the complement of $\{\infty ,w_{1},w_{2}\}$ where $w_{1}={\sigma}$ and $w_{2}=-{\sigma}$. 
\end{ex}
\section{Black Holes and Rod Structure} \label{sec:bh}
In this section we are going to specify the assumptions about our space-time and define the rod structure, a tool which turns out useful for the black hole classification in higher dimensions.

\subsection{Black Holes --- Assumptions}

We assume our space-time to be asymptotically flat, stationary and axisymmetric. The axisymmetry is a $\mathrm{U}(1)$-symmetry, so we imagine it as a rotation around a codimension-2 hypersurface. 
Note that for $\dim M>4$ there is the possibility to rotate around multiple independent planes (if say $M$ is asymptotically flat). For spatial dimension $n-1$ we can group the coordinates in pairs $(x_{1},x_{2}), (x_{3},x_{4}), \ldots$ where each pair defines a plane for which polar coordinates $(r_{1},\varphi_{1}), (r_{2},\varphi_{2}), \ldots$ can be chosen. Thus there are $N=\lfloor \frac{n-1}{2}\rfloor$ independent (commuting) rotations each associated with an angular momentum.
\begin{Def}
An $n$-dimensional space-time $M$ will be called \textit{stationary and axisymmetric} if it admits $n-3$ of the above $\mathrm{U}(1)$ axisymmetries in addition to the timelike symmetry.
\end{Def}
However, note that this yields an important limitation. As shown in \cite[Sec.~3.1]{Myers:1986aa} and \cite{Emparan:2008aa}, for globally asymptotically flat space-times we have by definition an asymptotic factor of $S^{n-2}$ in the spatial geometry, and $S^{n-2}$ has isometry group $\mathrm{O}(n-1)$. The orthogonal group $\mathrm{O}(n-1)$ in turn has Cartan subgroup $\mathrm{U}(1)^{N}$ with $N=\lfloor \frac{n-1}{2}\rfloor$, that is there cannot be more than $N$ commuting rotations. But each of our rotational symmetries must asymptotically approach an element of $\mathrm{O}(n-1)$ so that $\mathrm{U}(1)^{n-3}\subseteq\mathrm{U}(1)^{N}$, and hence
\begin{equation*}
n-3\leq N=\left\lfloor \frac{n-1}{2}\right\rfloor,
\end{equation*}
which is only possible for $n=4,5$. Therefore, stationary and axisymmetric solutions in our sense can only have a globally asymptotically flat end in dimension four and five. 

Despite this, much of the theory is applicable in any dimension greater than four. The definitions can in that case be modified, for example by requiring that spacelike infinity is diffeomorphic to $\mathbb{R}^{n}\backslash B(R)\times N$ instead of diffeomorphic to $\mathbb{R}^{n}\backslash B(R)$, with $N$ a compact manifold of the relevant dimension.

Henceforth we are going to assume that, if not mentioned differently, we are given a \textit{vacuum (non-degenerate black hole) space-time $(M,g)$ which is five-dimensional, globally hyperbolic, asymptotically flat, 
stationary and axisymmetric, and that is analytic up to and including the boundary $r=0$. We are not considering space-times where there are points with a discrete isotropy group.} Note that stationarity and 
axisymmetry includes orthogonal transitivity, which was necessary for the construction in Section~\ref{sec:yang}. The assumption of analyticity might seem unsatisfactory, but in this paper we are going to 
focus on concepts concerning the uniqueness of five-dimensional black holes rather than regularity. The assumption of nondegeneracy, that is that horizons are nondegenerate, ensures that poles in the patching matrix $P(z)$ introduced below 
are simple. It may be possible to treat degenerate black holes by allowing double poles but we have not explored that.

\subsection{Rod Structure} \label{sec:rodstr}

Examples in five dimensions, such as the Myers-Perry solution and the black ring, have shown that the obvious analogue of the Carter-Robinson Theorem does not hold in higher dimensions, that is 
in five dimensions the mass and the two angular momenta are not sufficient to characterise black hole space-times. As extra variables the rod structure has been studied, for 
example in \cite{Hollands:2008fp}, with promising results.

In terms of the $(r,z)$-coordinates from Section~\ref{sec:yang} we define, as in \cite[Sec.~III.B.1]{Harmark:2004rm}.
\begin{Def} \label{def:rodstr}
A \textit{rod structure} is a subdivision of the $z$-axis into a finite number of intervals where to each interval a three-vector is assigned. 
The intervals are referred to as \textit{rods}, the vectors as \textit{rod vectors} and the finite number of points defining the subdivision as \textit{nuts}.
\end{Def}
In order to assign a rod structure to a given space-time we quote the following proposition.
\begin{prop}[Proposition~3 in \cite{Hollands:2008fp}]
Let $(M, g_{ab})$ be the exterior of a stationary, asymptotically flat, analytic, five-dimensional vacuum black hole space-time with connected horizon and isometry group $G=\mathrm{U}(1)^{2}\times \mathbb{R}$. Then the orbit space $\hat M = M / G$ is a simply connected 2-manifold with boundaries and corners. If $\skew{7}{\tilde}{A}$ denotes the matrix of inner products of the spatial (periodic) Killing vectors then furthermore, in the interior, on the one-dimensional boundary segments (except the segment corresponding to the horizon), and at the corners $\skew{7}{\tilde}{A}$ has rank 2, 1 or 0, respectively.
\end{prop}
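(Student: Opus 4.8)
The plan is to analyse the isometric action of $G=\mathrm{U}(1)^{2}\times\mathbb{R}$ on $M$ orbit type by orbit type, reducing everything to the local structure of group actions except for one genuinely global input. First I would split off the non-compact factor: after passing to the domain of outer communications, where the stationary Killing field is nowhere zero, it generates a free and proper $\mathbb{R}$-action, so $M/\mathbb{R}$ is a smooth $4$-manifold (and $M\cong(M/\mathbb{R})\times\mathbb{R}$ since $\mathbb{R}$ is contractible, so $\pi_{1}(M)=\pi_{1}(M/\mathbb{R})$) carrying a residual action of the torus $\mathrm{U}(1)^{2}$, with $\hat M=M/G=(M/\mathbb{R})/\mathrm{U}(1)^{2}$. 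By the standing hypothesis that no point has discrete isotropy, every isotropy subgroup of the torus action is a closed \emph{connected} subgroup of $\mathrm{U}(1)^{2}$, hence one of $\{e\}$, a circle, or all of $\mathrm{U}(1)^{2}$. Applying the slice theorem around an orbit with isotropy $H$ and reading off the linear $H$-action on the slice, a neighbourhood of the image point in $\hat M$ is modelled on $\mathbb{R}^{2}$ when $H=\{e\}$, on a closed half-plane when $H$ is a circle (which acts on the $3$-dimensional slice as a rotation of an $\mathbb{R}^{2}$-summand), and on a closed quarter-plane when $H=\mathrm{U}(1)^{2}$ (the torus acts on the $4$-dimensional slice with an invertible integral weight matrix, which is also what excludes orbifold corners). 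This yields the description of $\hat M$ as a $2$-manifold with boundary and corners: its interior is the image of the principal free orbits; its boundary segments are the rotation-axis rods (images of the circle-isotropy sets) together with the horizon (the image of the boundary $\partial M$, which $G$ preserves); and its corners, the nuts, are the images of the $\mathrm{U}(1)^{2}$-fixed points and of the points where the horizon meets a rotation axis.

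For simple connectivity I would combine two facts. The first is that $\pi_{1}(M)\to\pi_{1}(\hat M)$ is surjective: quotienting a manifold by a \emph{connected} group acting with connected orbits never creates new loops, and this survives the singular orbits here (apply it to the free $\mathbb{R}$-action and then to the $\mathrm{U}(1)^{2}$-action). The second is topological censorship: for an asymptotically flat, globally hyperbolic vacuum space-time the domain of outer communications, hence $M$, is simply connected. Together these give $\pi_{1}(\hat M)=1$; I would import both ingredients from \cite{Hollands:2008fp} and the topological-censorship literature rather than reprove them.

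For the rank statement, write $\tilde A=(g_{ab}\,\xi_{i}^{a}\,\xi_{j}^{b})$ with $\xi_{1},\xi_{2}$ the two periodic Killing fields. On a principal orbit (the interior of $\hat M$) the $\xi_{i}$ span a $2$-plane which must be spacelike --- otherwise a suitable linear combination of the $\xi_{i}$ would be a causal Killing field with closed orbits, violating global hyperbolicity --- so $\tilde A$ is positive definite and has rank $2$. On a non-horizon rod the circle isotropy is generated by a nonzero constant combination $v=c^{i}\xi_{i}$ that vanishes identically along the rod, so $v\in\ker\tilde A$, while the complementary combination of the $\xi_{i}$ is still a nonvanishing spacelike Killing field; hence the rank is exactly $1$. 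The horizon rod is excluded precisely because there the degenerating combination is $\partial_{t}+\Omega^{i}\xi_{i}$, involving the stationary field rather than the $\xi_{i}$, so $\mathrm{span}\{\xi_{1},\xi_{2}\}$ does not collapse (it remains rank $2$, as in the interior). Finally, at a corner of the rotational axis the isotropy is the full $\mathrm{U}(1)^{2}$, so both $\xi_{1}$ and $\xi_{2}$ vanish there and $\tilde A=0$; a corner where the horizon meets an axis needs a short separate check (only one $\xi_{i}$ degenerates there), and is naturally grouped with the horizon segment in the statement.

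The hard part will be the global step, establishing $\pi_{1}(\hat M)=1$: the stratification of $\hat M$ is a routine application of the slice theorem and fixed-point theory for isometries, and the rank computations are elementary once one knows which Killing combinations degenerate on which stratum, but simple connectivity of the orbit space genuinely rests on topological censorship together with control of how the torus action wraps the domain of outer communications out to spatial infinity --- which is where asymptotic flatness, and the presence of the semi-infinite axis rods, enter essentially.
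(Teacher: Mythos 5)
The paper does not prove this statement at all: it is quoted verbatim from \cite{Hollands:2008fp} (``we quote the following proposition''), so there is no in-paper proof to compare against. Judged against the proof in that reference, your sketch follows essentially the same strategy --- split off the free, proper $\mathbb{R}$-action, stratify the residual $\mathrm{U}(1)^{2}$-action on the resulting $4$-manifold by orbit type via the slice theorem (using the paper's standing hypothesis of no discrete isotropy to force connected isotropy groups), invoke topological censorship plus surjectivity of $\pi_{1}(M)\to\pi_{1}(\hat M)$ for simple connectivity, and read off the rank of $\skew{7}{\tilde}{A}$ from which Killing combinations degenerate on each stratum. That is the right architecture, and you have correctly isolated the only genuinely global input (simple connectivity of the domain of outer communications).

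Two points deserve care. First, your rank-$2$ argument in the interior needs both linear independence of $\xi_{1},\xi_{2}$ (from trivial isotropy on principal orbits) \emph{and} spacelikeness of their span; linear independence alone does not give a nondegenerate Gram matrix in a Lorentzian metric, so the causality argument (a causal direction in the span yields, after perturbing to rational coefficients, a closed causal curve contradicting global hyperbolicity) is not optional and should be stated as carrying the full weight of positive-definiteness. Second, you are right to flag the corners where the horizon rod meets an axis rod: there only one periodic combination vanishes, so $\skew{7}{\tilde}{A}$ has rank $1$, not $0$; the rank-$0$ assertion holds only at corners where two axis rods meet (fixed points of the full torus). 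As stated in the paper the proposition is slightly loose on this point, and your ``short separate check'' is exactly the needed caveat rather than a defect of your argument.
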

Furthermore, since $\det \skew{7}{\tilde}{A} \neq  0$ in the interior of $\hat M$, the metric on the quotient space must be Riemannian. Then $\hat M$ is an (orientable) simply connected two-dimensional analytic manifold with boundaries and corners. The Riemann mapping theorem thus provides a map of $\hat M$ to the complex upper half plane where some further arguments show that the complex coordinate can be written as ${\zeta}=z+ \irm r$. So, starting with a space-time $(M,g)$ the line segments of the boundary $\partial  \hat M$ give a subdivision of the $z$-axis 
\begin{equation} \label{eq:sptrodsdef}
(-\infty , a_{1}),\ (a_{1},a_{2}), \ldots  , (a_{N-1},a_{N}),\ (a_{N},\infty )
\end{equation}
as the boundary of the complex upper half plane. This subdivision is moreover unique up to translation $z\mapsto z+\text{const.}$ which can be concluded from the asymptotic behaviour. 
For details see \cite[Sec.~4]{Hollands:2008fp}. This subdivision is now our first ingredient for the rod structure assigned to $(M,g)$. Since we have seen in Section~\ref{sec:bundles} that one nut is always at 
$z=∞$ (Corollary~\ref{cor:singofP}), we take the set of nuts to be $\{a_{0}=∞,a_{1},…,a_{N}\}$.

As the remaining ingredient we need the rod vectors. The imposed constraint $-r^{2}=\det J(r,z)$ implies $\det J(0,z)=0$, and therefore
\begin{equation*}
\dim \ker J(0,z)\geq 1.
\end{equation*}
We will refer to the set $\{r=0\}$ as the \textit{axis}. Taking the subdivision \eqref{eq:sptrodsdef} we define the 
rod vector for a rod $(a_{i},a_{i+1})$ as the vector that spans $\ker J(0,z)$ for $z\in (a_{i},a_{i+1})$ (we will not distinguish between the vector and its $\mathbb{R}$-span). A few comments on that. 

First consider the horizon. Assuming that our space-time is not static, we learn from the Rigidity Theorem \cite[Thm.~2]{Hollands:2007aa} that there exist $N$, $N\geq 1$, linear independent Killing vectors $X_{1},\ldots ,X_{N}$ that commute mutually and with the timelike Killing vector ${\xi}$. These Killing vector fields generate periodic commuting flows, and there exists a linear combination
\begin{equation*}
K={\xi}+{\Omega}_{1}X_{1}+\ldots +{\Omega}_{N}X_{N}, \quad {\Omega}_{i}\in \mathbb{R}, \qquad \text{(I)}
\end{equation*}
so that the Killing field $K$ is tangent and normal to the null generators of the horizon $\mathcal{H}$, and $g(K,X_{i})=0$~(II) on $\mathcal{H}$. These conditions are equivalent to
\begin{align*}
g_{ti} + \sum _{j}{\Omega}_{j}g_{ij} & = 0 \quad \text{on }\mathcal{H}, \qquad \text{(II)} \\
g_{tt} + 2 \sum _{i}{\Omega}_{i}g_{ti} +\sum _{i,j}{\Omega}_{i}{\Omega}_{j}g_{ij} & = 0 \quad \text{on }\mathcal{H}, \qquad \text{(I)}\\
\stackrel{\text{(II)}}{\Longleftrightarrow} g_{tt} + \sum _{i} {\Omega}_{i} g_{ti} & = 0 \quad \text{on }\mathcal{H}. 
\end{align*}
Hence
\begin{equation*}
\renewcommand{\arraystretch}{1.5}
J\tilde K = \left(\begin{array}{c}g_{tt} + \sum _{i} {\Omega}_{i} g_{ti} \\g_{ti} + \sum _{j}{\Omega}_{j}g_{ij}\end{array}\right) = 0  \quad \text{on }\mathcal{H},\ \text{where}\ \tilde K = \left(\begin{array}{c}1 \\{\Omega}_{1} \\{\Omega}_{2}\end{array}\right).
\end{equation*}
In other words $\tilde K$ is an eigenvector of $J$ on $\mathcal{H}$. So, by the change of basis ${\xi}\mapsto K$, $X_{i}\mapsto X_{i}$ the first row and column of $J$ diagonalizes with vanishing 
eigenvalue towards $\mathcal{H}$. On the other hand away from any of the rotational axes the axial symmetries $X_{1}$, $X_{2}$ are independent and non-zero, thus the rank of $J$ drops on the horizon precisely by one and the kernel is spanned by $\tilde K$. Note that if the horizon is connected precisely one rod in \eqref{eq:sptrodsdef} will correspond to $\mathcal{H}$. 

Second, consider the rods which do not correspond to the horizon (assuming that $\mathcal{H}$ is connected). Proposition 1 and the argument leading to Proposition 3 in \cite{Hollands:2008fp} show that on those rods the 
rotational Killing vectors are linearly dependent and the rank of $J$ again drops precisely by one. Whence, on each rod $(a_{i}, a_{i+1})$ that is not the horizon, there is a vanishing linear combination $aX_{1}+bX_{2}$. 
Therefore the vector $\left(\begin{array}{ccc}0 & a & b\end{array}\right)^{\mathrm{t}}$ spans the $\ker J(0,z)$, $z\in (a_{i}, a_{i+1})$. By \cite[Prop.~1]{Hollands:2008fp} $a$ and $b$ are constant so that we take $aX_{1}+bX_{2}$ as the rod vector on $(a_{i}, a_{i+1})$.

\begin{rem}
The fact that $a$ and $b$ are constant is not explicitly shown in the proof of \cite[Prop.~1]{Hollands:2008fp}, but follows quickly from \cite[Eq.~(11)]{Hollands:2008fp}. 
\end{rem}

Note that the nuts of the rod structure are the points which correspond to the corners of $\hat M$ and that is where the rank of $J$ drops precisely by two. So, at those points $\dim \ker J = 2$. 

\begin{ex}[Rod Structure of Four-Dimensional Schwarzschild Space-Time, taken from Section~3.1 in \cite{Fletcher:1990aa}] \label{ex:schwarz}
The Schwarzschild solution in four dimensions has in usual coordinates the form
\begin{equation*}
\drm s^{2} = -\left(1-\frac{2m}{R}\right) \drm T^{2}+\left(1-\frac{2m}{R}\right)^{-1} \drm R^{2} + R^{2}(\drm {\Theta}^{2}+ \sin^{2}{\Theta} \,\drm {\Phi}^{2}),
\end{equation*}
and is obtained in Weyl coordinates $(t,r,{\varphi},z)$ by replacing
\begin{equation*}
z=(R-m)\cos {\Theta}, \quad r=(R^{2}-2mR)^{\frac{1}{2}}\sin {\Theta},\quad t=T, \quad {\varphi}= {\Phi}.
\end{equation*}
If the symmetry group is generated by $X=\partial _{{\varphi}}$ and $Y=\partial _{t}$, then one can calculate the matrix of inner products of the Killing vectors as
\begin{equation*}
\renewcommand\arraystretch{1.8}
J=\left(\begin{array}{cc}\dfrac{r^{2}}{f} & 0 \\ \hphantom{-}0 & -f\end{array}\right)
\end{equation*}
where
\begin{equation*}
f=\frac{r_{+}+r_{-}-2m}{r_{+}+r_{-}+2m}\quad  \text{ with }\quad  r^{2}_{\pm}=r^{2}+(z\pm m)^{2}.
\end{equation*}
Note that $r_{+}=|z+m|$, $r_{-}=|z-m|$ for $r=0$ so that for $-m\leq z\leq m$ and $r=0$ we have $r_{+}=z+m$, $r_{-}=m-z$. Hence, $f$ vanishes for $r=0$, $-m\leq z\leq m$. Yet, applying l'H\^opital's rule twice shows that $r^{2}/f$ does not vanish for $r=0$, $-m<z<m$. So, the rod structure can be read off. It consists of the subdivision of the $z$-axis into $(-\infty ,-m)$, $(-m,+m)$ and $(+m,+\infty )$ and the rod vectors as in Figure~\ref{fig:SchWrodstr}. The semi-infinite rods correspond to the rotation axis and the finite one to the horizon. At $\{r=0, z=\pm m\}$ the entry $r^{2}/f$ blows up. Furthermore, we see that the boundary values of the rods are related to the mass of the black hole.\\ \mbox{} \hfill $\blacksquare$
\begin{figure}[htbp]
\begin{center}
     \scalebox{0.8}{\input{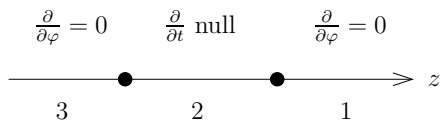}}
     \caption{Rod structure of the four-dimensional Schwarzschild solution. The numbers are only for the ease of reference to the parts of the axis later on.} 
     \label{fig:SchWrodstr}
\end{center}
\end{figure}
\end{ex}

There is a better way of visualizing the topology associated with the rod structure in five dimensions (from private communication with Piotr Chru\'{s}ciel). First consider five-dimensional Minkowski space. We leave the time coordinate and only focus on the spatial part. It is Riemannian, has dimension four, thus we can write it in double polar coordinates $(r_{1},{\varphi}_{1},r_{2},{\varphi}_{2})$. Then the first quadrant in Figure ~\ref{fig:Minkrodstr}, that is $\{r_{1}\geq 0,r_{2}\geq 0\}$, corresponds to the space-time.
\begin{figure}[htbp]
\begin{center}
     \scalebox{0.6}{\input{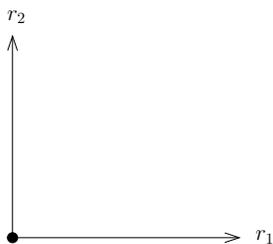}}
     \caption{Rod Structure for five-dimensional Minkowski space.} 
     \label{fig:Minkrodstr}
\end{center}
\end{figure}
The diagram suppresses the angles, so that each point in $\{r_{1}\geq 0,r_{2}\geq 0\}$ represents $S^{1}\times S^{1}$ where the radius of the corresponding circle is $r_{i}$. On the axes it thus degenerates to $\{\text{pt}\}\times S^{1}$. The boundary of our space-time, $r=0$, is in these polar coordinates $\{r_{1}=0\}\cup \{r_{2}=0\}$, and the nut is at the origin $r_{1}=r_{2}=0$.

Since our interest lies in asymptotically flat space-times, the rod structures for other space-times will be obtained from this one by modifying its interior and leaving the asymptotes unchanged. For example we can cut out a quarter of the unit disc as in Figure~\ref{fig:MProdstr}.
\begin{figure}[htbp]
\begin{center}
     \scalebox{0.6}{\input{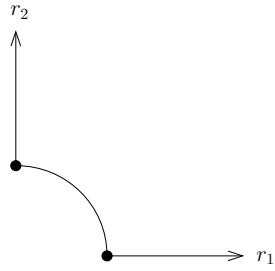}}
     \caption{Rod Structure with horizon topology $S^{3}$.} 
     \label{fig:MProdstr}
\end{center}
\end{figure}
But cutting out the quarter of the unit disc is nothing else than cutting out $r_{1}^{2}+r_{2}^{2}\leq 1$ (obviously taking the radius not to be one does not make any difference for the topology). Therefore the middle rod is the boundary of a region with topology $S^{3}$. So, if this is the horizon of a black hole, then the black hole has horizon topology $S^{3}$. Finally look at Figure~\ref{fig:BRrodstr1}.
\begin{figure}[htbp]
\begin{center}
     \scalebox{0.6}{\input{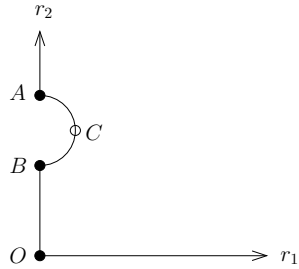}}
     \caption{Rod Structure with horizon topology $S^{2}\times S^{1}$. The nuts are at $A$, $B$ and $O$, where the rod between $A$ and $B$ corresponds to the horizon.} 
     \label{fig:BRrodstr1}
\end{center}
\end{figure}
This rod structure has three nuts:\;at $A$, at $B$ and at the origin $O$, that is at $r_{1}=r_{2}=0$. If the rod limited by $A$ and $B$ represents the horizon then the horizon topology is $S^{2}\times S^{1}$, which can be seen by rotating Figure~\ref{fig:BRrodstr1} first about the vertical and then about the horizontal axis. Another visualization is depicted in Figure~\ref{fig:BRrodstr2}, 
\begin{figure}[htbp]
\begin{center}
     \scalebox{0.6}{\input{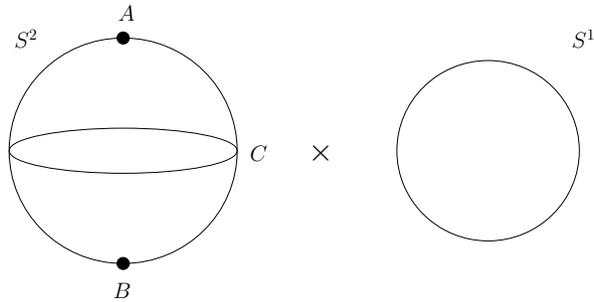}}
     \caption{Visualization of $S^{2}\times S^{1}$ topology.} 
     \label{fig:BRrodstr2}
\end{center}
\end{figure}
where the labelled points correspond to
\begin{align*}
A:&\ \{\text{pt}\}\times S^{1} \in  \mathbb{R}^{2}\times \mathbb{R}^{2},\ B:\ \{\text{pt}\}\times S^{1} \in  \mathbb{R}^{2}\times \mathbb{R}^{2}, \ O:\ \{\text{pt}\}\times \{\text{pt}\} \in  \mathbb{R}^{2}\times \mathbb{R}^{2},\\  C&:\ \hphantom{\{}S^{1} \hspace{0.1cm}\times S^{1} \in  \mathbb{R}^{2}\times \mathbb{R}^{2}.
\end{align*}
Interpolating the transition between those points explains the topology as well.

The relation between those more geometrical diagrams and the above definition of rod structure, that is only the $z$-axis with the nuts, can be made by the Riemann mapping theorem (see for example \cite[Sec.~4]{Hollands:2008fp}).

The rod structure is essential for the characterization of stationary axisymmetric black hole solutions. As mentioned above, such a solution in five dimensions is no longer uniquely given by its mass and angular momenta. But it is shown in \cite{Hollands:2008fp} that two solutions with connected horizon are isometric if their mass, angular momenta and rod structures coincide, if the exterior of the space-time contains no points with discrete isotropy group (see theorem in \cite{Hollands:2008fp}). 
\section{Twistor Approach in Five Dimensions} \label{sec:fivedim}
Most of what we have seen about the twistor construction at the end of Section~\ref{sec:bundles} generalizes at once to five and higher dimensions. Only, instead of two, 
the rank of the bundle for 5-dimensional vacuum solutions will be three so that we have three integers instead of only $p$ and $q$ in our twistor data (respectively $n-2$ integers in higher dimensions). The splitting 
procedure itself is not affected by increasing the rank. However, since the splitting itself is complicated, we have seen that the Ernst potential $J'$ is a crucial tool for 
any practical application of the twistor characterization of stationary axisymmetric solutions of Einstein's field equations, and the way we obtained $J'$ in 
Section~\ref{sec:bundles} seemed to be tailored to four dimensions with two Killing vectors. So, in order to pursue this strategy we have to define an Ernst 
potential in five dimensions. First, we are going to say a few words about B\"acklund transformations, because we will see that we secretly used them to obtain $J'$. 
Since some of the following extends immediately to higher dimensions as well, we will present most of it in $n$ dimensions. 

In the last part of this section we generalize results from \cite[Sec.~2.4]{Fletcher:1990aa} in order to conclude the important fact that the integers in the twistor 
data are non-negative as in four dimensions.

\subsection{B\"acklund Transformations}

In Section~\ref{sec:yang} we derived Yang's equation as one way of writing the ASDYM equations with gauge group $\mathrm{GL}(n,\mathbb{C})$. It has a number of `hidden' 
symmetries one of which is the B\"acklund transformation. 

As in \cite[Sec.~4.6]{Mason:1996hl} we can decompose a generic $J$-matrix in the following way
\begin{equation} \label{eq:bdecomp} 
\renewcommand{\arraystretch}{1.5}
J=
\left(\begin{array}{cc}A^{-1}-\skew{4}{\tilde}{B} \skew{7}{\tilde}{A} B & -\skew{4}{\tilde}{B} \skew{7}{\tilde}{A} \\ \skew{7}{\tilde}{A} B & \skew{7}{\tilde}{A}\end{array}\right)
=
\left(\begin{array}{cc}1 & \skew{4}{\tilde}{B} \hphantom{^{-1}} \\0 & \skew{7}{\tilde}{A}^{-1}\end{array}\right)^{-1}
\left(\begin{array}{cc}A^{-1} & 0 \\B \hphantom{^{-1}} & 1\end{array}\right),
\end{equation}
where $A$ is a $k\times k$ non-singular matrix ($k<n$), $\skew{7}{\tilde}{A}$ is $\skew{2}{\tilde}{k} \times  \skew{2}{\tilde}{k}$ non-singular matrix with 
$k+\skew{2}{\tilde}{k}=n$. Then, $B$ is a $\skew{2}{\tilde}{k}\times k$ and $\skew{4}{\tilde}{B}$ a $k\times \skew{2}{\tilde}{k}$ matrix. The term `generic' 
rules out for example cases where $\skew{7}{\tilde}{A}$ is not invertible. Substituting this in the reduced Yang's equation~\eqref{eq:redyang} we get the coupled 
system of equations\begin{equation} \label{eq:redbtrf} 
\begin{split}
r ∂_{z}(\skew{7}{\tilde}{A} B_{z}A)+∂ _{r}(r\skew{7}{\tilde}{A} B_{r}A) =0,\\
r∂_{z}(A \skew{4}{\tilde}{B}_{z}\skew{7}{\tilde}{A})+∂ _{r}(rA \skew{4}{\tilde}{B}_{r}\skew{7}{\tilde}{A}) =0,\\
r∂_{z}(\skew{7}{\tilde}{A}^{-1}\skew{7}{\tilde}{A}_{z})\skew{7}{\tilde}{A}^{-1}-∂ _{r}(r\skew{7}{\tilde}{A}^{-1}\skew{7}{\tilde}{A}_{r})\skew{7}{\tilde}{A}^{-1}+rB_{r}A\skew{4}{\tilde}{B}_{r}-rB_{r}A\skew{4}{\tilde}{B}_{r} =0,\\
rA^{-1}∂ _{z}(A_{z} A^{-1})-A^{-1}∂ _{r}(rA_{r} A^{-1})+r\skew{4}{\tilde}{B}_{r}\skew{7}{\tilde}{A} B_{r}-r\skew{4}{\tilde}{B}_{r}\skew{7}{\tilde}{A} B_{r} =0, 
\end{split}
\end{equation}
where an index denotes a partial derivative and all matrices are functions of $z$ and $r$. The first two equations are integrability conditions and they imply 
the existence of matrices $B'$ and $\skew{4}{\tilde}{B}'$ such that
\begin{align*}
∂_{r} \skew{4}{\tilde}{B}' & = r \skew{7}{\tilde}{A} B_{z}A, ∂_{z} \skew{4}{\tilde}{B}' & = -r\skew{7}{\tilde}{A} B_{r}A,\\
∂_{r} B' & = rA\skew{4}{\tilde}{B}_{z} \skew{7}{\tilde}{A}, ∂_{z} B' & = -rA\skew{4}{\tilde}{B}_{r} \skew{7}{\tilde}{A}.
\end{align*}
\begin{Def}
Together with $B'$ and $\skew{4}{\tilde}{B}'$ we define the other primed quantities as 
\begin{equation*} 
\left(A,\skew{7}{\tilde}{A},B,\skew{4}{\tilde}{B},k,\skew{2}{\tilde}{k}\right) \mapsto \left(A'=\skew{7}{\tilde}{A}^{-1},\skew{7}{\tilde}{A}'=r^{-2}A^{-1},B',\skew{4}{\tilde}{B}',k'=\skew{2}{\tilde}{k},\skew{2}{\tilde}{k}'=k\right).
\end{equation*}
We call the matrix $J'$, that is obtained from $J$ by \eqref{eq:bdecomp} with the primed blocks instead of the unprimed, the \textit{B\"acklund transform}.
\end{Def}
\begin{prop}[Section~4.6 in \cite{Mason:1996hl}] \label{prop:btsolyang}\mbox{}
\begin{enumerate}
\item $J'$ is again a solution of the reduced Yang's equation.
\item The B\"acklund transformation is invertible.
\end{enumerate}
\end{prop}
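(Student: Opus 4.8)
The plan is to prove both statements by a direct computation exploiting the block decomposition~\eqref{eq:bdecomp} together with the explicit definition of the primed data; this is also how the corresponding result is obtained in \cite[Sec.~4.6]{Mason:1996hl}. (One could instead read~\eqref{eq:redyang} as the compatibility condition of the linear system defining the bundle $E\to\mathcal{R}$ in Section~\ref{sec:bundles} and recognise the B\"acklund transformation as a ``dressing'' of that system, from which part~(2) is immediate; but the elementary route is more self-contained here.)

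\emph{Part (1).} I would substitute the decomposition~\eqref{eq:bdecomp}, written with $A',\tilde A',B',\tilde B'$ in place of the unprimed blocks, into~\eqref{eq:redyang}; this produces the system~\eqref{eq:redbtrf} with every symbol primed. The two ``integrability'' equations of this primed system reduce, once the defining relations $\partial_r B'=rA\tilde B_z\tilde A$, $\partial_z B'=-rA\tilde B_r\tilde A$ and $\partial_r\tilde B'=r\tilde A B_zA$, $\partial_z\tilde B'=-r\tilde A B_rA$ are inserted (and the substitutions $A'=\tilde A^{-1}$, $\tilde A'=r^{-2}A^{-1}$ are used), to the bare statement that mixed partial derivatives commute, and hence hold automatically. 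The content therefore sits in the two remaining (second-derivative) block equations of the primed system, and the task is to verify these using the relations above, $(\tilde A')^{-1}=r^{2}A$, $(A')^{-1}=\tilde A$, and the fact that $J$ solves~\eqref{eq:redyang}, i.e.\ that all four equations of~\eqref{eq:redbtrf} hold. Carrying this out, each of these two block equations reduces to a combination of the equations in~\eqref{eq:redbtrf} — essentially the two ``second-derivative'' block equations together with the integrability conditions that produced $B'$ and $\tilde B'$ — and the extra factor $r^{-2}$ in $\tilde A'$ is harmless precisely because $\log r$ solves the axisymmetric Laplace equation~\eqref{eq:axilap} (equivalently, because $r^{c}J$ and $J^{-1}$ solve~\eqref{eq:redyang} whenever $J$ does, the first of these already having been used in Section~\ref{sec:yang}). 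The ``generic'' hypothesis is used exactly to guarantee that $A'=\tilde A^{-1}$ and $\tilde A'=r^{-2}A^{-1}$ are invertible off $\{r=0\}$, so that the decomposition of $J'$ is legitimate.

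\emph{Part (2).} I would compute the iterated transform. Chaining the definitions gives $A''=(\tilde A')^{-1}=r^{2}A$ and $\tilde A''=r^{-2}(A')^{-1}=r^{-2}\tilde A$, while feeding the primed relations into the defining relations for the second-generation potentials and cancelling the matrix factors and $r$-powers yields $\partial_r B''=-\partial_r B$, $\partial_z B''=-\partial_z B$, and likewise for $\tilde B''$, so $B''=-B$ and $\tilde B''=-\tilde B$ up to additive constants, which are irrelevant since the potentials are defined only up to such constants. Reassembling through~\eqref{eq:bdecomp} then gives $J''=r^{-2}\,C\,J\,C^{-1}$ with the constant matrix $C=\mathrm{diag}(I_{k},-I_{\tilde k})$. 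Since $r^{-2}=\erm^{u}$ with $u=-2\log r$ a solution of~\eqref{eq:axilap} and $C$ is constant, $J''$ represents the same solution as $J$ under the ambiguities $J\mapsto\erm^{u}J$ and $J\mapsto AJB$ recorded in Section~\ref{sec:yang}; hence the B\"acklund transformation, regarded on solutions modulo these ambiguities, has as two-sided inverse the transformation itself followed by the rescaling $J\mapsto r^{2}J$ and the conjugation $J\mapsto C^{-1}JC$, which is assertion~(2).

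The main obstacle is the bookkeeping in Part~(1): one has to keep careful track of which transformed block equation matches which equation of~\eqref{eq:redbtrf}, and in particular handle the factor $r^{-2}$ attached to $A^{-1}$ in $\tilde A'$, which is where the harmonicity of $\log r$ enters essentially (it is what makes the ``$-2r^{-1}$''-type terms generated by $\partial_r(r^{-2})$ drop out after the outer derivative). Part~(2) is then a short consequence of the same relations, the only points needing a remark being the additive constants absorbed into the potentials and the appearance of the sign matrix $C$.
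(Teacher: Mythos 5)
Your proposal is correct and follows essentially the same route as the paper's (very terse) proof: part (1) by substituting the primed blocks into the system \eqref{eq:redbtrf} and checking that the integrability equations trivialise while the remaining block equations reduce to the unprimed system (with the $r^{-2}$ in $\skew{7}{\tilde}{A}'$ harmless by harmonicity of $\log r$), and part (2) by observing that the transformation is essentially involutive on the block data. Your part (2) is more explicit than the paper's — you compute the square $J''=r^{-2}CJC^{-1}$ rather than just noting that the defining relations for $B'$, $\skew{4}{\tilde}{B}'$ can be inverted — but this is the same underlying idea and your bookkeeping (including the signs $B''=-B$ and the constant-of-integration caveat) checks out against the formulas as given.
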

\begin{proof}\mbox{}
\begin{enumerate}
\item Substitute the unprimed by the primed versions in \eqref{eq:redbtrf}.
\item Noting that $B_{w}^{\vphantom{1}}=\skew{7}{\tilde}{A}^{-1}\skew{4}{\tilde}{B}'_{\tilde z}A^{-1}=A'\skew{4}{\tilde}{B}'_{\tilde z}\skew{7}{\tilde}{A}$ and 
similar for the other integrability equations we see that the definition of $B'$ and $\skew{4}{\tilde}{B}'$ is involutive. The inverse transformation on the other 
blocks is easy to see.
\end{enumerate}
\end{proof}
\begin{prop}\label{prop:bdet} \mbox{}
\begin{equation*}
\det J' = (-r)^{2(1-k)}
\end{equation*}
\end{prop}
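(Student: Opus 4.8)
The plan is to compute $\det J'$ directly from the block decomposition \eqref{eq:bdecomp} applied with the primed blocks, and then substitute the definitions of $A'$, $\skew{7}{\tilde}{A}'$ from the Definition. The key observation is that the second factorization in \eqref{eq:bdecomp} expresses $J$ as a product of a lower-triangular and an upper-triangular block matrix (up to the inverse on the first factor), whose determinants are therefore products of the determinants of the diagonal blocks. Concretely, applying \eqref{eq:bdecomp} to the primed quantities gives
\begin{equation*}
J' = \begin{pmatrix} 1 & \skew{4}{\tilde}{B}' \\ 0 & (\skew{7}{\tilde}{A}')^{-1}\end{pmatrix}^{-1}\begin{pmatrix} (A')^{-1} & 0 \\ B' & 1 \end{pmatrix},
\end{equation*}
so that $\det J' = \det(\skew{7}{\tilde}{A}')\cdot \det\big((A')^{-1}\big) = \det(\skew{7}{\tilde}{A}')/\det(A')$, using that the triangular blocks have unit diagonal blocks of size $k'$ and $\skew{2}{\tilde}{k}'$ respectively except for the genuine diagonal entries shown.

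Next I would plug in $A' = \skew{7}{\tilde}{A}^{-1}$ and $\skew{7}{\tilde}{A}' = r^{-2}A^{-1}$, which are $\skew{2}{\tilde}{k}\times\skew{2}{\tilde}{k}$ and $k\times k$ matrices respectively (recall $k' = \skew{2}{\tilde}{k}$, $\skew{2}{\tilde}{k}' = k$). Then $\det(A') = \det(\skew{7}{\tilde}{A})^{-1}$ and $\det(\skew{7}{\tilde}{A}') = r^{-2k}\det(A)^{-1}$, hence
\begin{equation*}
\det J' = \frac{r^{-2k}\det(A)^{-1}}{\det(\skew{7}{\tilde}{A})^{-1}} = r^{-2k}\,\frac{\det\skew{7}{\tilde}{A}}{\det A}.
\end{equation*}
To finish, I need to relate $\det\skew{7}{\tilde}{A}/\det A$ back to $\det J$. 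From the same triangular factorization for the unprimed $J$ one reads off $\det J = \det(\skew{7}{\tilde}{A})\cdot\det(A)^{-1} = \det\skew{7}{\tilde}{A}/\det A$. Now invoke the standing determinant constraint $\det J = -r^2$ (condition (a) in Section~\ref{sec:yang}, which holds for the $J$-matrices of interest), giving $\det\skew{7}{\tilde}{A}/\det A = -r^2$ and therefore $\det J' = r^{-2k}\cdot(-r^2) = (-1)\,r^{2(1-k)} = (-r)^{2(1-k)}$, where the last equality is because $2(1-k)$ times together with the single sign works out: more carefully, $(-r)^{2(1-k)} = r^{2(1-k)}$ when one is cavalier, so I should present the identity as $\det J' = -r^{2(1-k)}$ and note this equals $(-r)^{2(1-k)}$ under the paper's sign conventions, or alternatively track the sign of $\det J$ without assuming it is exactly $-r^2$ (the reduction allows $\det J$ to differ by the factor $\erm^{(n-2)u}$), in which case the clean statement is $\det J' = -\det J \cdot r^{-2k} \cdot (\text{the appropriate power})$ and the stated formula is the normalized case.

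The main obstacle is bookkeeping rather than conceptual: one must be careful that the block sizes are correctly transposed under the B\"acklund map ($k\leftrightarrow\skew{2}{\tilde}{k}$), that the $r^{-2}$ factor in $\skew{7}{\tilde}{A}'$ contributes $r^{-2k}$ (a $k\times k$ determinant) and not $r^{-2\skew{2}{\tilde}{k}}$, and that the sign in $(-r)^{2(1-k)}$ is consistent with whatever normalization of $\det J$ is in force. A secondary point worth a sentence is checking that the factorization \eqref{eq:bdecomp} is legitimate here — i.e. that $\skew{7}{\tilde}{A}$, and hence $A'$, is invertible — but this is exactly the genericity hypothesis already built into the decomposition, so no extra work is needed. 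Once the indices are pinned down the computation is a two-line determinant calculation.
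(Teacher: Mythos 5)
Your proof is correct and follows essentially the same route as the paper: both read off $\det J = \det(\skew{7}{\tilde}{A})\,\det(A^{-1})$ from \eqref{eq:bdecomp} (the paper via the Schur-complement determinant formula, you via the triangular factorization) and then substitute the primed blocks to obtain $\det J' = r^{-2k}\det J$. Your sign-tracking is also right: with $\det J=-r^{2}$ the computation literally yields $-r^{2(1-k)}$, and the paper's displayed $(-r)^{2(1-k)}$ silently absorbs that minus sign.
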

\begin{proof} 
Note that for the decomposition of a general matrix in block matrices we know from basic linear algebra 
\begin{equation*}
\det \left(\begin{array}{cc}P & Q \\R & S\end{array}\right)
= \det(S) \det(P-QS^{-1}R),
\end{equation*}
if $S$ is invertible. Applied to our decomposition \eqref{eq:bdecomp} this yields
\begin{equation} \label{eq:detj} 
\det J = \det(\skew{7}{\tilde}{A}) \det(A^{-1}-\skew{4}{\tilde}{B} \skew{7}{\tilde}{A} B + \skew{4}{\tilde}{B} \skew{7}{\tilde}{A} \skew{7}{\tilde}{A}^{-1}\skew{7}{\tilde}{A} B)=\det(\skew{7}{\tilde}{A})\det(A^{-1}).
\end{equation}
Now using the fact that $\det J =- r^{2}$ we obtain
\begin{align*} 
\det J' & = \det (\skew{7}{\tilde}{A}') \det ((A')^{-1}) = (-r)^{-2k} \det (A^{-1}) \det (\skew{7}{\tilde}{A})= (-r)^{-2k} \det (J)\\
	& = (-r)^{2(1-k)}
\end{align*}
\end{proof}

\begin{ex}[B\"acklund Transformation in Four Dimensions]
We consider the four-dimensional case with two Killing vectors. As in \eqref{eq:backldec4d} we set $A=r^{-2} f$, $\skew{7}{\tilde}{A} = -f$, $\skew{4}{\tilde}{B}=-B={α}$. 
Then the integrability equations imply the existence of a function $ψ$ as in \eqref{eq:intweyl}. The blocks for the B\"acklund transform are $B'=-\skew{4}{\tilde}{B}'={ψ}$, 
$A'=\skew{7}{\tilde}{A}^{-1}=f^{-1}$, $\skew{7}{\tilde}{A}'=-r^{-2}A^{-1}=f^{-1}$ defining $J'$ as in \eqref{eq:ernstpot4d}. Hence, our Ernst potential in four dimensions is 
obtained by a B\"acklund transformation.\\ \mbox{} \hfill $\blacksquare$
\end{ex}

\subsection{Higher-Dimensional Ernst Potential}

Let us first recall the definition of twist 1-forms, twist potentials and some of their properties.
\begin{Def}
Consider an $n$-dimensional (asymptotically flat) space-time $M$ with $X_{0}$ a stationary and $X_{1},\ldots ,X_{n-3}$ axial Killing vectors, all mutually commuting. 
The \textit{twist $1$-forms} are defined as
\begin{align*}
ω_{1 a}^{\vphantom{1}} & = Δ\,{ε}_{ab\ldots cde}^{\vphantom{1}}X_{1}^{b}\cdots X_{n-3}^{c}∇ ^{d}X_{1}^{e}, \\
& \hspace{0.2cm} \vdots \\
ω_{n-3, a}^{\vphantom{1}} & =Δ\,{ε}_{ab\ldots cde}^{\vphantom{1}}X_{1}^{b}\cdots X_{n-3}^{c}∇ ^{d}X_{n-3}^{e},
\end{align*}
where  $Δ=\sqrt{-g}=r{\erm}^{2{\nu}}$, according to \eqref{eq:sigmametric}. (Note that in \cite{Hollands:2008fp} the notation is taken from \cite{Wald:1984rz} where ${ε}$ 
is already the volume element.)
\end{Def}
Adopting a vector notation
\begin{equation*}
ω=\left(\begin{array}{c}ω_1 \\ \vdots \\ω_{n-3}\end{array}\right), \quad X=\left(\begin{array}{c}X_1 \\ \vdots \\X_{n-3}\end{array}\right),
\end{equation*}
this can be written as
\begin{equation*}
ω_{a}^{\vphantom{1}} = Δ\,{ε}_{ab\ldots cde}^{\vphantom{1}}X_{1}^{b}\cdots X_{n-3}^{c}∇ ^{d}X^{e}. 
\end{equation*}
Define ${\theta}_{I}^{\vphantom{j}}$, $I\in \{1,\ldots ,n-3\}$, by ${\theta}_{Ik}^{\vphantom{j}}=g_{kj}^{\vphantom{j}}X_{I}^{j}$. 
\begin{prop} \mbox{}
\begin{enumerate}
\item The twist 1-forms can be written as
\begin{equation} \label{eq:twisthodge}
ω_{I}=* ({\theta}_{1}∧ \ldots ∧ {\theta}_{n-3}∧ \drm {\theta}_{I}), \quad I\in \{1,\ldots ,n-3\}.
\end{equation}
\item $ω$ is closed.
\item $ω$ annihilates the Killing vector fields $X_{0},\ldots ,X_{n-3}$.
\end{enumerate}
\end{prop}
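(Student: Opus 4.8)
The plan is to establish the three claims by direct computation, exploiting the Hodge-dual reformulation in (1) as the common engine.

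\emph{Part (1).} First I would unpack the Levi-Civita-symbol definition of $\omega_{I}$ and compare it with the right-hand side of \eqref{eq:twisthodge}. With $\Delta=\sqrt{-g}$, the tensor $\Delta\,\varepsilon_{ab\ldots cde}$ is (up to a sign convention) the volume form $\mathrm{vol}_{g}$, so that contracting $n-4$ slots with $X_{1},\ldots,X_{n-3}$ and one slot with $\nabla^{d}X_{I}^{e}$ is, by definition of the Hodge star, the same as $*(\theta_{1}\wedge\cdots\wedge\theta_{n-3}\wedge\alpha_{I})$ where $\alpha_{I}$ is the $1$-form with components $\nabla_{e}\theta_{I}^{\ e}$ raised — and one checks $\theta_{1}\wedge\cdots\wedge\theta_{n-3}\wedge\mathrm{d}\theta_{I}=\theta_{1}\wedge\cdots\wedge\theta_{n-3}\wedge\alpha_{I}$ because the antisymmetrization kills the Christoffel (symmetric) part of $\nabla\theta_{I}$ and, inside the wedge with all the $\theta_{J}$, the terms of $\mathrm{d}\theta_{I}$ involving $\nabla X_{J}^{e}$ for $J\neq I$ that would come from $\nabla^{d}X^{e}$ in the vector-notation version drop out. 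This is a bookkeeping exercise with index gymnastics and a careful tracking of the normalization of $*$.

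\emph{Part (2).} To show $\omega$ is closed I would differentiate \eqref{eq:twisthodge}: $\mathrm{d}\omega_{I}=\mathrm{d}*(\theta_{1}\wedge\cdots\wedge\theta_{n-3}\wedge\mathrm{d}\theta_{I})$, and express $\mathrm{d}*\beta$ via the codifferential $\delta$ acting on $\theta_{1}\wedge\cdots\wedge\theta_{n-3}\wedge\mathrm{d}\theta_{I}$. The key input is the vacuum Einstein equation $R_{ab}=0$: contracting the Ricci identity $\nabla_{a}\nabla_{b}X_{I}^{b}-\nabla_{b}\nabla_{a}X_{I}^{b}=R_{ac}X_{I}^{c}$ for a Killing vector gives $\nabla^{b}\nabla_{a}\theta_{I\,b}= -R_{ab}X_{I}^{b}=0$, i.e.\ each $\theta_{I}$ is (co-)harmonic in the relevant sense. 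Feeding this into the divergence of the wedge product — together with the fact that the $X_{I}$ commute, so the Lie-bracket terms $[X_{I},X_{J}]$ vanish and the mixed terms cancel pairwise — yields $\mathrm{d}\omega_{I}=0$. The commutativity is what makes the cross terms disappear; without it one only gets closure modulo bracket contributions.

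\emph{Part (3).} Finally, $\omega_{I}(X_{J})=0$ for every $J\in\{0,\ldots,n-3\}$ is immediate from \eqref{eq:twisthodge} once one observes that $\iota_{X_{J}}*(\theta_{1}\wedge\cdots\wedge\theta_{n-3}\wedge\mathrm{d}\theta_{I})=\pm *(\theta_{J}^{\flat}\wedge\theta_{1}\wedge\cdots\wedge\theta_{n-3}\wedge\mathrm{d}\theta_{I})$ — but $\theta_{J}^{\flat}=\theta_{J}$ is one of the factors already present in the wedge for $J\in\{1,\ldots,n-3\}$, so the expression vanishes by repetition; and for $J=0$ one uses orthogonal transitivity, which forces $\theta_{0}$ to lie in the span of $\theta_{1},\ldots,\theta_{n-3}$ along the axis directions in the same way, or more directly uses that $X_{0}$ is orthogonal to the $2$-surface where $\mathrm{d}\theta_{I}$ lives. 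Here I should be a little careful: the cleanest argument for the $X_{0}$ case is to note $\omega_{I}$ has, by construction, all its free index contracted against objects built from $X_{1},\ldots,X_{n-3}$ and their derivatives, and contracting once more with any $X_{J}$ introduces either a repeated Killing vector (antisymmetrized to zero) or the antisymmetrized second derivative $X_{[I}^{\vphantom{1}}\nabla X_{J]}$ which vanishes for commuting Killing fields.

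\emph{Main obstacle.} The genuinely delicate step is Part (2): one must show the divergence of the $(n-1)$-form $\theta_{1}\wedge\cdots\wedge\theta_{n-3}\wedge\mathrm{d}\theta_{I}$ vanishes, which requires combining the Killing identity, the vacuum equation $R_{ab}=0$, and the commutativity $[X_{I},X_{J}]=0$ in exactly the right way so that all terms either annihilate or cancel. Getting the signs and the combinatorial cancellation of the $\binom{n-3}{1}$-many cross terms right is where the real work lies; Parts (1) and (3) are essentially formal once the Hodge-dual dictionary is set up.
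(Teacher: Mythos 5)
Your parts (1) and (2) follow essentially the paper's route: (1) is the Hodge-dual dictionary (and is even simpler than you make it, since the Killing equation already makes $\nabla_{a}\theta_{I\,b}$ antisymmetric, so $\mathrm{d}\theta_{I}$ is directly twice $\nabla\theta_{I}$ and there are no cross terms from other $X_{J}$ to cancel), and (2) is the Wald-style computation from the vacuum field equations that the paper itself invokes. Those sketches are fine.

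The gap is in part (3), in the contraction with the stationary Killing vector $X_{0}$. Your identity $\iota_{X_{J}}*\beta=\pm*(\beta\wedge\theta_{J})$ settles $J\in\{1,\ldots,n-3\}$ by repetition of a wedge factor, but for $J=0$ the resulting $n$-form $\theta_{0}\wedge\theta_{1}\wedge\cdots\wedge\theta_{n-3}\wedge\mathrm{d}\theta_{I}$ has no repeated factor and is a generically nonzero multiple of the volume form. Neither of the justifications you offer works: $\theta_{0}$ does not lie in the span of $\theta_{1},\ldots,\theta_{n-3}$ (that would make the Killing covectors linearly dependent), and contracting with $X_{0}$ introduces neither a repeated Killing vector nor an antisymmetrized second derivative that vanishes by commutativity. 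The scalar $*(\theta_{0}\wedge\theta_{1}\wedge\cdots\wedge\theta_{n-3}\wedge\mathrm{d}\theta_{I})$ is exactly one of the obstructions whose vanishing characterizes orthogonal transitivity (the generalized Frobenius/Papapetrou condition), so that hypothesis must enter substantively rather than in passing. The paper's own argument supplies this implicitly by computing in the adapted coordinates of \eqref{eq:sigmametric}: the block-diagonal form of the metric there (which is the statement of orthogonal transitivity) shows directly that only the $r$- and $z$-components of $\omega_{I}$ are nonvanishing, whence contraction with every $X_{J}=\partial_{y^{J}}$, including $J=0$, gives zero. You should either reproduce that component computation or cite the equivalence of orthogonal transitivity with the vanishing of these contracted scalars; as written the $J=0$ case is unproven.
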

\begin{proof} \mbox{}
\begin{enumerate}
\item This is clear.
\item This is implied by the vacuum field equations, analogously to the proof of \cite[Thm.~7.1.1]{Wald:1984rz}.
\item Using \eqref{eq:twisthodge}, a direct calculation shows that $ω_{r}$ and $ω_{z}$ are the only non-vanishing components of the twist 1-forms, from which this follows. \qedhere
\end{enumerate}
\end{proof}
The last statement in the above proposition implies that $ω$ can also be regarded as a set of 1-forms on the interior of $M / \mathcal{G}$. Due to the form of the ${\sigma}$-model 
metric there should be no confusion if we denote both the form on $M$ and the one on $M / \mathcal{G}$ by the same symbol. Being a form on $M / \mathcal{G}$ means that $ω$ 
has only non-vanishing components for the $r$- and $z$-coordinate, and of course as a form on $M / \mathcal{G}$ it will again be closed. This leads to the following 
definition. 
\begin{Def}
Locally there exist functions on $M / \mathcal{G}$ such that 
\begin{equation*}
∂ _{r} χ_{I}=ω_{Ir} \text{ and } ∂ _{z} χ_{I}=ω_{Iz}\quad  \text{for } I=1,\ldots ,n-3,
\end{equation*}
or equivalently in vector notation
\begin{equation*}
\drm χ = ω.
\end{equation*}
These functions $χ_{I}$ are called \textit{twist potentials}. 
\end{Def}
The construction of the Ernst Potential in \eqref{eq:ernstpot4d} is tailor-made for dimension four, and it is not immediately obvious how to generalize it to higher 
dimensions. Nevertheless, there is an ansatz in \cite{Maison:1979aa}, where it is noted that the full metric on space-time, that is essentially $J$, can be reconstructed 
from knowing the two twist potentials (in five dimensions) $χ$, the $2×2$-matrix $\skew{7}{\tilde}{A}=\left(X_{I}^{a}X_{K}^{b}g_{ab}\right)_{I,K=1,2}$ and its non-vanishing 
determinant $\det \skew{7}{\tilde}{A}$ on the factor space $M / \mathcal{G}$. The matrix in \cite[Eq.~(16)]{Maison:1979aa} will then be our candidate for the higher-dimensional 
Ernst Potential. Note, however, that the condition $\det \skew{7}{\tilde}{A}≠0$ needs further investigation. 

Closely connected to the non-vanishing determinant is the concept of adaptations to certain parts of the axis $r=0$. Recall that the assumption of axis-regularity was an 
important one. It said that the region where the two spheres of the reduced twistor space are identified can be enlarged to a simply connected patch such that this identification 
also extends to the fibres of the bundle. The exact shape of $V'$ is not important, however, there is still an ambiguity if we have a nut on $r=0$ (remember that we assume that 
there is only a finite number of isolated nuts). Figure~\ref{fig:adaptation} shows how we can choose different extensions of $V$.
\begin{figure}[htbp]
\begin{center}
     \scalebox{0.6}{\input{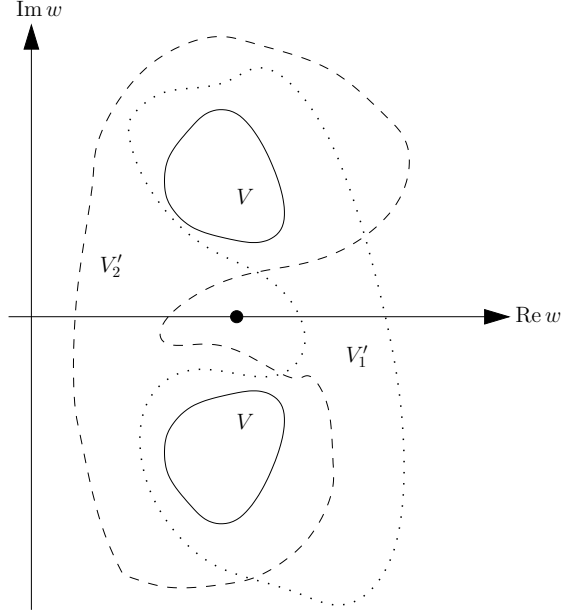}}
     \caption{Two different extensions $V'_{1}$ and $V'_{2}$ of $V$ around a pole of $J'$ (bullet on the real axis).} 
     \label{fig:adaptation}
\end{center}
\end{figure}
From Example~\ref{ex:schwarz} we learn that the choice of $V'$ matters, that is we obtain different Ernst potentials $J'$ for different extensions.
\begin{Def} 
In any dimension given an axis-regular $J'$ we shall call it \textit{adapted to the rod $(a_{i},a_{i+1})$} if $V'|_{r=0}\subseteq (a_{i},a_{i+1})$.
\end{Def}
For a rod corresponding to a rotational axis we know that along this rod a linear combination of the axial Killing vectors vanishes. By a change of basis we can always 
assume that without loss of generality this vector is already in the basis, say $X_{K}=0$, $K>0$. Then we make the following definition.
\begin{Def}
In dimension $n$ we call \textit{$\skew{7}{\tilde}{A}$ adapted to $(a_{i},a_{i+1})$} the $(n-3)×(n-3)$-matrix that is obtained from $J$ by 
\begin{enumerate}[(a)]
\item cancelling the $K^{\mathrm{th}}$ column and row, if $(a_{i},a_{i+1})$ is a rotational axis and $X_{K}$ the corresponding rod vector; 
\item cancelling the $0^{\mathrm{th}}$ column and row, if $(a_{i},a_{i+1})$ is the horizon.
\end{enumerate}
\end{Def}
The following lemma shows the reason for the latter definition.
\begin{lem}
$\skew{7}{\tilde}{A}$ adapted to $(a_{i},a_{i+1})$ is invertible on $(a_{i},a_{i+1})$ and becomes singular at the limiting nuts.
\end{lem}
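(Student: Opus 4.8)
The plan is to reduce the statement to an elementary fact about a principal submatrix of a symmetric matrix versus its kernel, and then feed in what is already known about the rank of $J(0,z)$ along a rod and at a nut. First I would dispose of the change of basis implicit in the definition of $\tilde A$. On a rotational‑axis rod the rod vector $aX_1+bX_2$ has \emph{constant} coefficients by \cite[Prop.~1]{Hollands:2008fp} (cf.\ the Remark above), and on the horizon rod the rod vector $\tilde K=\xi+\Omega_1X_1+\Omega_2X_2$ has constant coefficients because the $\Omega_i$ are the angular velocities of $\mathcal H$. Hence in both cases there is a single constant $\mathrm{GL}(n-2,\mathbb R)$ transformation after which the rod vector becomes a coordinate basis vector $e_K$ ($K$ an axial index in case (a), $K=0$ in case (b)), and $\tilde A$ adapted to $(a_i,a_{i+1})$ is literally the principal submatrix of the transformed $J(0,z)$ with row and column $K$ deleted. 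Since $g_{ab}$, hence $J$, is analytic up to and including $r=0$, $\tilde A$ is then a continuous (analytic) $(n-3)\times(n-3)$ matrix‑valued function on the \emph{closed} interval $[a_i,a_{i+1}]$.

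The elementary fact I would isolate is this: if $M$ is a real symmetric $(n-2)\times(n-2)$ matrix whose $K$-th row and column vanish, and $M_K$ is the principal submatrix obtained by deleting that row and column, then for any vector $w$ with $w_K=0$ one has $Mw=0\iff M_Kw'=0$, where $w'$ is $w$ with the $K$-th slot dropped; the only input is $(Mw)_i=(M_Kw')_i$ for $i\ne K$ and $(Mw)_K=0$. Consequently $\ker M_K\cong\{w\in\ker M:w_K=0\}$. Applying this on the open rod $(a_i,a_{i+1})$, where we know (from the discussion preceding the lemma, resp.\ \cite[Prop.~1]{Hollands:2008fp}) that $\ker J(0,z)=\mathbb R\,e_K$ is one‑dimensional — in particular the $K$-th row and column of $J(0,z)$ vanish there — the set $\{w\in\ker J(0,z):w_K=0\}$ is $\{0\}$, so $\ker\tilde A(0,z)=0$ and $\tilde A$ is invertible on $(a_i,a_{i+1})$.

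For the endpoints: by continuity of $J$ up to $r=0$, the $K$-th row and column of $J(0,z)$ still vanish at $z=a_i$ and $z=a_{i+1}$, so $e_K\in\ker J(0,a_i)$; and at a nut $\dim\ker J(0,a_i)=2$, since the nuts are exactly the corners of $\hat M$ where the rank of $J$ drops by two. Picking a second kernel vector and subtracting the appropriate multiple of $e_K$ we get a nonzero $w\in\ker J(0,a_i)$ with $w_K=0$; hence $\ker\tilde A(0,a_i)$ is one‑dimensional and $\tilde A(0,a_i)$ is singular, and likewise at $a_{i+1}$. (This in fact gives the sharper conclusion that $\tilde A$ has a one‑dimensional kernel at each limiting nut.)

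The one point needing genuine care — and the main obstacle — is the uniformity of the change of basis along the whole closed rod: if the trivializing transformation varied with $z$, then $\tilde A$ would not be one analytic matrix function up to the endpoints and the passage to the nuts would be illegitimate; this is precisely why the constancy statements cited in the first paragraph are needed. A minor extra point is the semi‑infinite rods, whose outer "nut" sits at $z=\infty$: there the degeneration of $\tilde A$ must be read off from the asymptotically flat form of the metric on the rotational axis rather than from the corner argument, and I would handle it by substituting the leading asymptotics of $J$ into $\det\tilde A$.
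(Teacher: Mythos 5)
Your proof is correct and follows essentially the same route as the paper: both arguments identify $\skew{7}{\tilde}{A}$ as the principal submatrix of $J(0,z)$ obtained by deleting the vanishing row and column, deduce invertibility on the open rod from $\operatorname{rank}J(0,z)=n-3$ there, and deduce singularity at the nuts from the rank dropping to $n-4$. Your kernel-correspondence formulation and the remarks on the constancy of the change of basis and on the semi-infinite rods merely make explicit points the paper leaves implicit.
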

\begin{proof}
This is a simple consequence of what we have seen in Section~\ref{sec:rodstr}. First, consider a rod corresponding to a rotational axis $(a_{i},a_{i+1})$. Here $J$ is an 
$(n-2)\times (n-2)$ matrix and has rank $n-3$. Since $X_{K}=0$ on $(a_{i},a_{i+1})$, we cancelled a zero column and row, thus it follows that $\skew{7}{\tilde}{A}$ has full 
rank and $\det \skew{7}{\tilde}{A}\neq 0$ on $(a_{i},a_{i+1})$. Of course, where the rank of $J$ drops further, that is where $\dim \ker J (0,z)\geq 2$, the matrix 
$\skew{7}{\tilde}{A}$ cannot have full rank anymore and $\det \skew{7}{\tilde}{A}=0$. So, this adaptation becomes singular as soon as we reach one of the nuts limiting 
this rod. 

Second, consider the horizon rod $(a_{h},a_{h+1})$. Here the first row and column of $J$, that is the one with the asymptotically timelike Killing vector in it, becomes zero. 
Then $\skew{7}{\tilde}{A}$ adapted to $(a_{h},a_{h+1})$ has full rank on $(a_{h},a_{h+1})$, and becomes singular at the nuts where the rotational axes intersect the horizon. 
\end{proof}

Note that at the beginning of this section we defined the twist 1-forms by singling out $X_{0}$. However, the definition works just as well with the set 
$X_{0},\ldots ,\hat X_{K},\ldots ,X_{n-2}$ of Killing vectors. For an adaptation to a certain rod we obtain in the same way as above $n-3$ twist potentials by omitting 
the rod vector for this rod. 

Now, we have collected all the components for the following definition, which is a modification of \cite[Eq.~(16)]{Maison:1979aa}.
\begin{Def} \label{def:ernstpot}
In $n$ dimensions and for a given rod $(a_{i},a_{i+1})$ we call the matrix
\begin{equation} \label{eq:highernst}
\renewcommand{\arraystretch}{1.5}
J'=\frac{1}{\det \skew{7}{\tilde}{A}} \left(\begin{array}{cc}\hphantom{-}1 & -χ^{\mathrm{t}} \\-χ & \det \skew{7}{\tilde}{A} \cdot \skew{7}{\tilde}{A} + χχ^{\mathrm{t}}\end{array}\right),
\end{equation}
where $\skew{7}{\tilde}{A}$ is adapted to $(a_{i},a_{i+1})$ and $χ$ is the vector of twist potentials for $(a_{i},a_{i+1})$, \textit{higher-dimensional Ernst potential adapted 
to $(a_{i},a_{i+1})$}.
\end{Def}
\begin{rem}\mbox{}
\begin{enumerate}
\item Evidently, $J'$ is symmetric and $\det J'=1$.
\item For $n=5$ this becomes on the horizon rod
\begin{equation} \label{eq:ernstn5} 
\renewcommand{\arraystretch}{1.5}
J'= \frac{1}{\det \skew{7}{\tilde}{A}} \left(\begin{array}{ccc}
\hphantom{-}1 & -χ_{1} & -χ_{2} \\
-χ_{1}^{\vphantom{1}} & \det \skew{7}{\tilde}{A} \cdot J_{11}^{\vphantom{1}}+χ_{1}^{2} & \det \skew{7}{\tilde}{A} \cdot J_{12}^{\vphantom{1}}+χ_{1}^{\vphantom{1}}χ_{2}^{\vphantom{1}} \\
-χ_{2}^{\vphantom{1}} & \det \skew{7}{\tilde}{A} \cdot J_{21}^{\vphantom{1}}+χ_{2}^{\vphantom{1}}χ_{1}^{\vphantom{1}} & \det \skew{7}{\tilde}{A} \cdot J_{22}^{\vphantom{1}}+χ_{2}^{2}
\end{array}\right).
\end{equation}
\item Note that $J=\big(g(X_{i},X_{j})\big)$ is a matrix of scalar quantities, hence $J$ is bounded for $r\to 0$. So, the domain of $J'$ is only restricted by 
$\det \skew{7}{\tilde}{A}$ and by the arguments above we see that for an adaptation to $(a_{i},a_{i+1})$ the limit $J'(0,z)$ is well-defined for $z∈(a_{i},a_{i+1})$.
\end{enumerate}
\end{rem}
Even though both its ingredients and the matrix $J'$ itself were known, the crucial new step for the twistor construction is to recognize the following.
\begin{thm}
$J'$ is obtained from $J$ by a B\"acklund transformation.
\end{thm}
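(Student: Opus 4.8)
The plan is to match the structure of the Ernst potential \eqref{eq:highernst} with the B\"acklund decomposition \eqref{eq:bdecomp} directly. Recall that, on the rod $(a_i,a_{i+1})$, the matrix $J$ is written in the basis in which the rod vector (either $X_0$ on the horizon, or some axial $X_K$ on a rotational axis) is singled out as the first coordinate. With respect to this ordering I would set $k=1$, $\tilde k=n-3$, so that $A$ is the $1\times 1$ block coming from the singled-out Killing vector, $\skew{7}{\tilde}{A}$ is exactly the $(n-3)\times(n-3)$ matrix ``$\skew{7}{\tilde}{A}$ adapted to $(a_i,a_{i+1})$'' of Definition~\ref{def:ernstpot}, and $B$, $\skew{4}{\tilde}{B}$ are the off-diagonal blocks. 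The decomposition \eqref{eq:bdecomp} then reads
\begin{equation*}
J=\left(\begin{array}{cc}A^{-1}-\skew{4}{\tilde}{B}\skew{7}{\tilde}{A}B & -\skew{4}{\tilde}{B}\skew{7}{\tilde}{A}\\ \skew{7}{\tilde}{A}B & \skew{7}{\tilde}{A}\end{array}\right),
\end{equation*}
and since $J$ is symmetric we have $\skew{4}{\tilde}{B}=-B^{\mathrm t}$ (the relative sign is fixed by comparing the $(1,2)$ and $(2,1)$ blocks, using symmetry of $\skew{7}{\tilde}{A}$). So the unprimed B\"acklund data attached to this rod is completely determined by $J$: one reads off $\skew{7}{\tilde}{A}$, then $B=\skew{7}{\tilde}{A}^{-1}J_{21}$, then $A^{-1}=J_{11}+B^{\mathrm t}\skew{7}{\tilde}{A}B$.

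Next I would compute the primed data from the Definition preceding Proposition~\ref{prop:btsolyang}: $A'=\skew{7}{\tilde}{A}^{-1}$, $\skew{7}{\tilde}{A}'=r^{-2}A^{-1}$, $k'=n-3$, $\tilde k'=1$, together with $B'$, $\skew{4}{\tilde}{B}'$ obtained by integrating $\drm B'=rA\,*\drm(\skew{4}{\tilde}{B}\skew{7}{\tilde}{A})$ and the companion equation (here $*$ denotes the obvious swap of $r,z$ derivatives with the factor $r$ as in the integrability relations). Feeding the primed blocks back into \eqref{eq:bdecomp}, now with the block sizes swapped so that the $(n-3)\times(n-3)$ block sits in the top-left corner, gives
\begin{equation*}
J'=\left(\begin{array}{cc}(A')^{-1}-\skew{4}{\tilde}{B}'\skew{7}{\tilde}{A}'B' & -\skew{4}{\tilde}{B}'\skew{7}{\tilde}{A}'\\ \skew{7}{\tilde}{A}'B' & \skew{7}{\tilde}{A}'\end{array}\right).
\end{equation*}
The bottom-right block is $\skew{7}{\tilde}{A}'=r^{-2}A^{-1}=r^{-2}(J_{11}+B^{\mathrm t}\skew{7}{\tilde}{A}B)$; this must be matched against the $(1,1)$ entry $1/\det\skew{7}{\tilde}{A}$ in \eqref{eq:highernst} after one observes that the block ordering in \eqref{eq:highernst} is the opposite of the one in the displayed $J'$ above (in \eqref{eq:highernst} the scalar sits top-left), so really it is the top-left block $(A')^{-1}-\skew{4}{\tilde}{B}'\skew{7}{\tilde}{A}'B'$ of the B\"acklund transform that should equal $1/\det\skew{7}{\tilde}{A}$, the bottom-right block that should equal $\skew{7}{\tilde}{A}+\det\skew{7}{\tilde}{A}^{-1}\chi\chi^{\mathrm t}$... — so the cleanest route is to permute rows and columns of \eqref{eq:highernst} to bring it to the canonical block form of \eqref{eq:bdecomp} and then read off candidate primed blocks: $\skew{7}{\tilde}{A}'=\det\skew{7}{\tilde}{A}^{-1}$ (a $1\times1$ block), $-\skew{4}{\tilde}{B}'\skew{7}{\tilde}{A}'=-\chi^{\mathrm t}/\det\skew{7}{\tilde}{A}$ hence $\skew{4}{\tilde}{B}'=\chi^{\mathrm t}$, and the top-left block $(A')^{-1}-\skew{4}{\tilde}{B}'\skew{7}{\tilde}{A}'B'=\skew{7}{\tilde}{A}+\chi\chi^{\mathrm t}/\det\skew{7}{\tilde}{A}$, which together with $\skew{4}{\tilde}{B}'\skew{7}{\tilde}{A}'B'=\chi\chi^{\mathrm t}/\det\skew{7}{\tilde}{A}$ forces $(A')^{-1}=\skew{7}{\tilde}{A}$, i.e. $A'=\skew{7}{\tilde}{A}^{-1}$, in agreement with the definition of the B\"acklund map. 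Thus everything is consistent provided one last thing holds: that the $B'$ extracted this way, namely $B'=\chi$, is exactly the potential produced by integrating the B\"acklund integrability equation $\drm B'=rA\,*\drm(\skew{4}{\tilde}{B}\skew{7}{\tilde}{A})$, equivalently $\drm\chi=-rA\,*\drm(B^{\mathrm t}\skew{7}{\tilde}{A})$.

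The main obstacle, and the only genuinely computational point, is precisely this identification $\drm\chi=\omega$ with the right-hand side of the B\"acklund integrability relation: one must show that the twist $1$-forms $\omega_I$ of Definition preceding \eqref{eq:twisthodge}, restricted to $M/\mathcal G$ and written in the $(r,z)$ variables, coincide up to sign with the components of $rA\,*\drm(B^{\mathrm t}\skew{7}{\tilde}{A})$ where $A,\skew{7}{\tilde}{A},B$ are the blocks of $J$ read off above. This is a local identity between the Hodge-dual expression \eqref{eq:twisthodge} for $\omega$ and the derivative of the metric blocks; I would prove it by expanding \eqref{eq:twisthodge} using $\theta_I{}_k=g_{kj}X_I^j$ and the relation between $\det\skew{7}{\tilde}{A}$, $r$ and the off-diagonal metric components (the identity $\det J=-r^2$ plus \eqref{eq:detj}), reducing both sides to the same first-order expression in the $g_{ab}$. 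Once this is in place, Proposition~\ref{prop:btsolyang} guarantees $J'$ solves the reduced Yang's equation, Proposition~\ref{prop:bdet} with $k=1$ gives $\det J'=(-r)^{0}=1$ consistently with the first remark after Definition~\ref{def:ernstpot}, and symmetry of $J'$ follows from $\skew{4}{\tilde}{B}'=(B')^{\mathrm t}$, which is automatic once $\skew{4}{\tilde}{B}'=\chi^{\mathrm t}$ and $B'=\chi$. Hence $J'$ as in \eqref{eq:highernst} is the B\"acklund transform of $J$.
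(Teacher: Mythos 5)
Your proposal is correct and follows essentially the same route as the paper's proof: both reduce the theorem to identifying the blocks of \eqref{eq:bdecomp} with $k=1$ and $\skew{7}{\tilde}{A}$ the adapted matrix, deducing $A=-r^{-2}\det\skew{7}{\tilde}{A}$ from $\det J=-r^{2}$ together with \eqref{eq:detj}, and recognising the B\"acklund potentials $B'=-\skew{4}{\tilde}{B}'$ as the twist potentials $\chi$. The one step you flag but do not carry out --- that the twist $1$-forms \eqref{eq:twisthodge} coincide with the integrability $1$-forms whose potential is $B'$ --- is precisely the ``direct calculation with the help of Laplace expansions'' that the paper likewise asserts without writing out, so you have isolated the same computational core at the same level of detail.
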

\begin{proof}
Show by direct calculation with the help of Laplace expansions that $A = - r^{-2} \det \skew{7}{\tilde}{A}$, and then $B' = -\skew{4}{\tilde}{B}' = χ$.
\end{proof}
This completes the justification for calling $J'$ defined in \eqref{eq:highernst} the Ernst potential --- it is symmetric and has unit determinant and by Proposition~\ref{prop:btsolyang} it satisfies Yang's equation.

\subsection{The Limit towards the Axis}

Note that if $P$ has a pole with $r≠0$, then it is obviously not splittable at this point (see \eqref{eq:patmat2}), hence we do not obtain $J$ at this point by the 
splitting procedure, which means the metric will be singular at this point. We shall exclude such situations. However, referring again to \eqref{eq:patmat2} 
it is evident that the splitting procedure in general does break down for $r→0$. So, one has to study this limit by means of other tools and we will see later on that
 even for regular space-times $P$ has real poles. As in Section~\ref{sec:bundles} we assume that there are only finitely many of them.

In \cite[Sec.~2.4]{Fletcher:1990aa} an asymptotic formula for $J$ as $r\to 0$ was given (originally, the result in four dimensions goes back to \cite{Ward:1983yg}). The 
method used there can be extended to higher dimensions, but since the intricacy of the extension lies rather in its technicality than in a deeper idea, we will only 
state the result here and refer for the proof to \cite[Sec.~8.4]{Metzner:2012aa}. Consider space-time dimension $n$ and assume that without loss of generality the integers 
in the twistor data are ordered, $p_0≥…≥p_{n-3}$. Then near the axis, that is for $r→0$, we obtain
\begin{equation} \label{eq:Jasympt}
J→\left(\begin{array}{@{}c@{\hspace{-0.2mm}}c@{\hspace{-0.2mm}}c@{}}r^{p_0} &  &  \\ & \ddots &  \\ &  & r^{p_{n-3}}\end{array}\right)
\left(\begin{array}{@{}cc@{}}g & -g Υ^{\mathrm{t}} \\-g Υ \rule{0cm}{5mm} & g Υ Υ^{\mathrm{t}} - \skew{7}{\tilde}{\mathcal{A}}\end{array}\right)
\left(\begin{array}{@{}c@{\hspace{-2mm}}c@{\hspace{-2mm}}c@{}}(-1)^{p_0} &  &  \\ & \ddots &  \\ &  & (-1)^{p_{n-3}}\end{array}\right)
\left(\begin{array}{@{}c@{\hspace{-0.2mm}}c@{\hspace{-0.2mm}}c@{}}r^{p_0} &  &  \\ & \ddots &  \\ &  & r^{p_{n-3}}\end{array}\right),
\end{equation}
where 
\begin{equation*}
Υ=\left(\begin{array}{c}Υ^{(1)} \\ \vdots \\Υ^{(n-3)}\end{array}\right) \quad  \text{with } Υ^{(l)}=\frac{r^{p_{0}-p_{l}}}{2^{p_{0}-p_{l}}(p_{0}-p_{l})!}χ^{(p_{0}-p_{l})}(z),
\end{equation*}
and $\skew{7}{\tilde}{\mathcal{A}}(w)$ a matrix whose entries are derived from the splitting of the patching matrix into a part holomorphic in $ζ$, that is on 
$S_{0}^{\vphantom{-1}}$, and a part holomorphic in $ζ^{-1}$, that is on $S_{1}^{\vphantom{-1}}$. 

This formula implies that on a section of the axis $r=0$ where $A$ is not singular, that is everywhere apart from the nuts, because of the boundedness of $J$ we must have 
$p_{i}\geq 0$. But then $\det J = -r^{2}$ leads to
\begin{equation*}
p_{0}=1,\ p_{1}=0, \ldots ,\ p_{n-3}=0, 
\end{equation*}
and if $\det J =1$ as for the Ernst potential we get back the analytic continuation formula
\begin{equation*}
J(r,z) = P(z) + \text{ higher orders in } r.
\end{equation*}

\section{Properties of $P$ and the Bundle} \label{sec:adapt}

Having different adaptations of $J'$ which are related to different extensions of $V$, the question arises whether these yield equivalent bundles over the reduced twistor 
space. To see that they do is not very hard using results in \cite{Fletcher:1990aa}.

\begin{prop}[Proposition~3.1 in \cite{Fletcher:1990aa}] \label{prop:EquBdls}
Suppose $E\to \mathrm{R}_{V}$, the bundle corresponding to a solution $J$ of Yang's equation on the set $V$, can be represented as the pullback of the bundles 
$E^{1}\to \mathrm{R}_{V'_{1}}$ and $E^{2}\to \mathrm{R}_{V'_{2}}$, where $V'_{1}$ and $V'_{2}$ are simply connected open sets which intersect the real axis in 
distinct intervals. Then $E^{1}$ and $E^{2}$ are the pullbacks of a bundle $\tilde E \to  \mathrm{R}_{V'}$ where $V'=V'_{1}\cup V'_{2}$. Moreover, we can express 
$\tilde E$ in standard form in two different ways; and one of the collections of patching matrices is identical to the collection used to describe $E^{1}$ and the 
other collection is identical to that used to describe $E^{2}$.
\end{prop}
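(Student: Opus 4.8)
The plan is to realise a holomorphic vector bundle over a non-Hausdorff reduced twistor space $\mathrm{R}_{W}$ --- two copies $S_{0},S_{1}$ of the $w$-sphere identified over $W$ --- as a triple $(\mathcal{E}_{0},\mathcal{E}_{1},\phi)$ consisting of a holomorphic bundle $\mathcal{E}_{0}$ on $S_{0}$, a holomorphic bundle $\mathcal{E}_{1}$ on $S_{1}$, and a holomorphic isomorphism $\phi\colon\mathcal{E}_{0}|_{W}\to\mathcal{E}_{1}|_{W}$. Write $U_{0}\cong S_{0}$ and $U_{1}\cong S_{1}$ for the two Hausdorff open pieces of $\mathrm{R}_{W}$, so that $U_{0}\cup U_{1}=\mathrm{R}_{W}$ and $U_{0}\cap U_{1}$ is the identified copy of $W$. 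A projection such as $\eta_{i}\colon\mathrm{R}_{V}\to\mathrm{R}_{V'_{i}}$ (for $V\subseteq V'_{i}$) or $\mu_{i}\colon\mathrm{R}_{V'_{i}}\to\mathrm{R}_{V'}$ (for $V'_{i}\subseteq V'$) only collapses the redundant $S_{1}$-sheet over the larger-minus-smaller set, hence restricts to a biholomorphism on each of $U_{0}$ and $U_{1}$; consequently pulling a bundle back along it leaves the two sphere-bundles unchanged and merely restricts the gluing isomorphism to the smaller identification set. I would set up this dictionary first, since everything else is phrased in terms of it.

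The second step is to exploit $E=\eta_{1}^{*}E^{1}=\eta_{2}^{*}E^{2}$. Comparing the $U_{0}$- and $U_{1}$-pieces shows that the sphere-bundles underlying $E^{1}$ and $E^{2}$ are each isomorphic to those underlying $E$, and that the gluing isomorphisms of $E^{1}$ and $E^{2}$ restrict over $V$ to that of $E$. Since all conclusions are claimed only up to isomorphism, I may replace $E^{1}$ and $E^{2}$ by isomorphic bundles so as to arrange outright that $E^{1}_{0}=E^{2}_{0}=:\mathcal{E}_{0}$ on $S_{0}$, that $E^{1}_{1}=E^{2}_{1}=:\mathcal{E}_{1}$ on $S_{1}$, and that the gluing isomorphisms satisfy $\phi^{1}|_{V}=\phi^{2}|_{V}$, both equal to the gluing isomorphism $\phi$ of $E$. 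Note that $V\subseteq V'_{1}\cap V'_{2}$.

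The geometric heart of the argument --- and the one step I expect to need genuine work rather than bookkeeping --- is to upgrade the equality of $\phi^{1}$ and $\phi^{2}$ from $V$ to all of $V'_{1}\cap V'_{2}$. Their difference $\phi^{1}-\phi^{2}$ is a holomorphic section of $\mathrm{Hom}(\mathcal{E}_{0},\mathcal{E}_{1})$ over the open set $V'_{1}\cap V'_{2}$ which vanishes on the non-empty open subset $V$; since the zero locus of a holomorphic section is closed and, by the identity theorem, open, it contains every connected component of $V'_{1}\cap V'_{2}$ that meets $V$. What remains is to verify, from the explicit shapes of $V$, $V'_{1}$, $V'_{2}$ (cf.\ Figure~\ref{fig:adaptation}), that $V'_{1}\cap V'_{2}$ has no component disjoint from $V$: this is exactly the role of the hypotheses that $V'_{1}$ and $V'_{2}$ are simply connected and meet the real axis in distinct intervals while each merely extends $V$ in a neighbourhood of a single nut, one reaching past it on one side and the other on the opposite side, so that $V'_{1}\cap V'_{2}$ deformation-retracts onto $V$. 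This point-set verification is where I expect the only real subtlety to lie; the rest is formal.

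Once $\phi^{1}=\phi^{2}$ on $V'_{1}\cap V'_{2}$, the two isomorphisms glue to a single holomorphic isomorphism $\tilde{\phi}\colon\mathcal{E}_{0}|_{V'}\to\mathcal{E}_{1}|_{V'}$ over $V'=V'_{1}\cup V'_{2}$, so $\tilde{E}:=(\mathcal{E}_{0},\mathcal{E}_{1},\tilde{\phi})$ is a holomorphic bundle over $\mathrm{R}_{V'}$, and the dictionary of the first paragraph gives $\mu_{i}^{*}\tilde{E}=(\mathcal{E}_{0},\mathcal{E}_{1},\tilde{\phi}|_{V'_{i}})=(\mathcal{E}_{0},\mathcal{E}_{1},\phi^{i})=E^{i}$, which is the first assertion. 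For the ``standard form'' statement, apply the Grothendieck splitting of Section~\ref{sec:bundles} (now in rank $n-2$) to $\mathcal{E}_{0}$ on $S_{0}$ and to $\mathcal{E}_{1}$ on $S_{1}$: this presents $\tilde{E}$ by a tuple of splitting integers together with the single holomorphic patching matrix that is the matrix of $\tilde{\phi}$ in the chosen frames, the remaining patching matrices being obtained by concatenation. Choosing here the Grothendieck frames used in the standard form of $E^{1}$ yields a standard form of $\tilde{E}$ whose patching-matrix collection restricts on $\mathrm{R}_{V'_{1}}$ to that of $E^{1}$; choosing instead the frames from $E^{2}$'s standard form does the same for $E^{2}$. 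These are the two standard forms named in the statement, which completes the plan.
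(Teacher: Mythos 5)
Your argument is correct in substance, but it is worth knowing that the paper does not actually prove this proposition: its entire ``proof'' is the observation that the argument of Fletcher (Proposition~3.1 of the cited thesis) nowhere uses the rank of the bundle, so it carries over verbatim from rank $2$ to rank $n-2$. What you have written is, in effect, a reconstruction of that underlying argument, and it follows the right route: a bundle over the non-Hausdorff space $\mathrm{R}_{W}$ is a triple $(\mathcal{E}_{0},\mathcal{E}_{1},\phi)$ with $\phi$ a gluing isomorphism over $W$; pullback along the collapsing projections only restricts $\phi$; the hypothesis $E=\eta_{1}^{*}E^{1}=\eta_{2}^{*}E^{2}$ forces $\phi^{1}|_{V}=\phi^{2}|_{V}$; the identity theorem propagates this to $V'_{1}\cap V'_{2}$; and the glued $\tilde{\phi}$ on $V'_{1}\cup V'_{2}$ defines $\tilde{E}$. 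The ``two standard forms'' statement then comes out of choosing the Grothendieck frames of $E^{1}$ or of $E^{2}$, exactly as you say (with the understanding that ``identical'' means the patching matrix of $\tilde{E}$ restricts to, i.e.\ analytically continues, that of $E^{i}$ on $V'_{i}$).

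The one step you rightly flag as non-formal --- that no component of $V'_{1}\cap V'_{2}$ is disjoint from $V$ --- is indeed the load-bearing hypothesis, and you should not leave it as a promissory note: without it the proposition is simply false, since $\phi^{1}$ and $\phi^{2}$ could then disagree on such a component and no $\tilde{\phi}$ restricting to both could exist. In the intended geometry (Figure~\ref{fig:adaptation}) the verification is immediate: each $V'_{i}$ is obtained from $V$ by adjoining an open neighbourhood of a single open real interval $I_{i}$, every point of which is a limit of points of $V$ from above and below. Hence any point of $V'_{1}\cap V'_{2}$ either lies in $V$ or lies in a neighbourhood of a point of $I_{1}\cap I_{2}$, and such a neighbourhood meets $V$; so every component of $V'_{1}\cap V'_{2}$ meets $V$ (and when $I_{1}$ and $I_{2}$ are disjoint one simply has $V'_{1}\cap V'_{2}=V$ near the axis). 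With that sentence added, your proof is complete and, unlike the paper's, self-contained; the trade-off is only length against the paper's economy of citation.
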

\begin{proof}
The proof in \cite{Fletcher:1990aa} only makes use of the construction for the reduced twistor space and not of the rank of the bundle. So, it carries over to higher
 dimensions.
\end{proof}

The bridge between Ernst potential and patching matrix is built by analytic continuation. 
\begin{prop}[Proposition~7.2 in \cite{Woodhouse:1988ek}] \label{prop:AnalyCont}
In five as well as in four dimensions if $P$ is a patching matrix of an axis-regular Ernst potential $J'$ on $V$, then $J'$ is analytic on (a choice of) $V'$ and 
$J'(0,z)=P(z)$ for real $z$. 
\end{prop}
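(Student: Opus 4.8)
The statement is the higher-dimensional repackaging of Proposition~7.2 of \cite{Woodhouse:1988ek}, and the plan is to run the four-dimensional argument of \cite{Woodhouse:1988ek, Fletcher:1990aa} using the tools already assembled: the splitting construction \eqref{eq:patmat1}--\eqref{eq:patmat2} of $J'$ from its patching data, the double-cover structure of the non-Hausdorff twistor space $\mathcal{R}'$, and the near-axis expansion \eqref{eq:Jasympt}, whose proof is deferred to \cite[Sec.~8.4]{Metzner:2012aa} and which we may assume. By axis-regularity there is a bundle $\hat E\to\mathcal{R}'$ with $\hat E|_{S_0}$ and $\hat E|_{S_1}$ trivial, so the patching data reduces to the ordered integers $p_0\geq\cdots\geq p_{n-3}$ together with the single holomorphic matrix $P(w)$ on $V'$.

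First I would fix the extension: choose $V'$ adapted to a rod $(a_i,a_{i+1})$, so that $V'|_{r=0}$ lies in the interior of that rod, away from the nuts. Then, by the remark following Definition~\ref{def:ernstpot}, $J'(0,z)$ exists and is bounded there, and since $J'$ is an Ernst potential $\det J'\equiv 1$. Feeding $J'$ into \eqref{eq:Jasympt}, boundedness as $r\to0$ forces $p_i\geq 0$, and $\det J'=1$ (hence $\det P=1$ and $\sum_i p_i=0$) then forces $p_0=\cdots=p_{n-3}=0$. With all $p_i=0$ the diagonal prefactors and the sign matrix in \eqref{eq:Jasympt} become the identity, $\Upsilon$ reduces to the vector $\chi$ of twist potentials, and the leading term is seen to coincide with the matrix \eqref{eq:highernst} evaluated at $r=0$; hence $J'(r,z)=P(z)+O(r)$, which yields $J'(0,z)=P(z)$ for real $z\in V'|_{r=0}$.

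Next I would prove analyticity on all of $V'$. A priori $J'$ is defined only on $V=\{z(\sigma)+\irm r(\sigma):\sigma\in\Sigma\}$; I would argue fibrewise. For fixed $\sigma$ the map $w(\zeta)=\tfrac{1}{2} r(\zeta^{-1}-\zeta)+z$ presents the fibre $\mathcal{X}$ as a double cover of the $w$-sphere branched over $w=z\pm\irm r$, and as $\sigma$ varies the branch point $z+\irm r$ traces out $V$, near which $P$ is holomorphic throughout $V'$. Since the splitting \eqref{eq:patmat2} gives $J'=f_0(0)f_1(\infty)^{-1}$, a Birkhoff factorization argument shows that $f_0,f_1$, and hence $J'$, depend holomorphically on the branch-point data over all of the simply connected $V'$, with poles only where the triviality hypothesis fails, i.e.\ where $\det\tilde A=0$. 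That the conclusion is independent of which rod-adapted $V'$ one chooses follows from Proposition~\ref{prop:EquBdls}, which is precisely why only \emph{a} choice of $V'$ is asserted.

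The main obstacle is interchanging the limit $r\to0$ with the splitting: as $r\to0$ the two branch points collide on the real axis, the double cover $\mathcal{X}\to\mathbb{CP}^1$ degenerates, and both $f_0$ and $f_1$ get pinched toward the point over $w=\infty$, so it is not obvious that $f_0(0)f_1(\infty)^{-1}$ converges at all, let alone to $P(z)$ rather than to $P$ at some other point or to a degenerate matrix. This is exactly what \eqref{eq:Jasympt} encodes, so in the present write-up that analysis is exported to \cite[Sec.~8.4]{Metzner:2012aa}; what is genuinely new is only (i) verifying that \eqref{eq:highernst} is the $r\to0$ leading term once all $p_i=0$, and (ii) the holomorphic-dependence and continuation step, both routine given the four-dimensional templates in \cite{Fletcher:1990aa} and \cite{Woodhouse:1988ek}.
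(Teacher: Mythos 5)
Your proposal is correct and follows essentially the same route as the paper: the paper's own proof of Proposition~\ref{prop:AnalyCont} is just a citation of \cite{Woodhouse:1988ek}, remarking only that the argument is independent of the rank of the bundle and that the positive-definiteness assumption made there is not actually needed. The substantive steps you reconstruct --- boundedness of $J'$ on the rod interior forcing $p_i\geq 0$, $\det J'=1$ forcing $p_0=\cdots=p_{n-3}=0$, and \eqref{eq:Jasympt} then yielding $J'(r,z)=P(z)+O(r)$, with the degenerating-splitting analysis deferred to \cite[Sec.~8.4]{Metzner:2012aa} --- are precisely the ingredients the paper itself assembles in the subsection on the limit towards the axis, so your write-up fills in the citation rather than departing from it.
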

\begin{proof}
This is literally the same as for \cite{Woodhouse:1988ek}, since it does not make use of the rank of the bundle. Note that even though in the statement there the 
assumption of $J$ being positive definite is made, it in fact is not necessary for the proof.
\end{proof}

\begin{cor} \label{cor:singofP}
A patching matrix $P$ has real singularities, that is points $z\in \mathbb{R}$ where an entry of $P$ has a singularity, at most at the nuts of the rod structure.
\end{cor}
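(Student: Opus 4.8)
The claim is that the patching matrix $P(z)$ of an axis-regular black-hole space-time can only have singularities on the real axis at the nuts $\{a_0=\infty,a_1,\ldots,a_N\}$. My plan is to combine Proposition~\ref{prop:AnalyCont} (which identifies $P(z)$ with $J'(0,z)$ on a suitable extension $V'$) with the last remark after Definition~\ref{def:ernstpot}, which says that for an adaptation to a rod $(a_i,a_{i+1})$ the limit $J'(0,z)$ exists and is finite for every $z\in(a_i,a_{i+1})$, precisely because $J$ itself is a matrix of bounded scalar products and the only possible blow-up comes from the factor $1/\det\skew{7}{\tilde}{A}$, which by the Lemma is nonzero on the open rod. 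So, rod by rod, the Ernst potential adapted to that rod is regular, hence so is $P$ there.

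**Main steps.** First I would fix an arbitrary non-nut point $z_0$ on the real axis; it lies in the interior of some rod $(a_i,a_{i+1})$. Next, choose the extension $V'$ of $V$ so that $J'$ is adapted to this rod, i.e.\ $V'|_{r=0}\subseteq(a_i,a_{i+1})$; this is exactly the freedom in choosing $V'$ illustrated in Figure~\ref{fig:adaptation}, and such a choice is always possible since the nuts are isolated. Then apply Proposition~\ref{prop:AnalyCont}: for this choice of $V'$, the patching matrix satisfies $P(z)=J'(0,z)$ for real $z\in V'|_{r=0}$, in particular near $z_0$. Finally invoke the third remark after Definition~\ref{def:ernstpot}: since $\skew{7}{\tilde}{A}$ adapted to $(a_i,a_{i+1})$ is invertible on the open rod (the Lemma), $\det\skew{7}{\tilde}{A}\neq 0$ there, and since every entry of $J$ (hence of $\skew{7}{\tilde}{A}$ and of the twist potentials $\chi$, which are smooth functions on $M/\mathcal{G}$) is bounded as $r\to 0$, the formula \eqref{eq:highernst} shows $J'(0,z_0)$ is a finite matrix. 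Hence $P$ is regular at $z_0$. As $z_0$ was an arbitrary non-nut real point, $P$ can be singular on $\mathbb{R}$ only at the nuts.

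**Loose end: independence of the choice of $V'$.** The one point that needs a word of care is that Proposition~\ref{prop:AnalyCont} produces $P$ as the analytic continuation of $J'$ for a \emph{particular} adaptation, whereas the corollary speaks of ``a patching matrix $P$'' without reference to an adaptation. This is handled by Proposition~\ref{prop:EquBdls}: different adaptations (different extensions $V'_1$, $V'_2$ meeting the real axis in distinct intervals) give rise to the same bundle $\tilde E\to\mathrm{R}_{V'}$ over $V'=V'_1\cup V'_2$, and the patching matrices for the two adaptations are literally two standard-form descriptions of the one bundle; they are therefore analytic continuations of one another across the overlap. So the function $P$ obtained rod-by-rod patches together into a single matrix-valued function on $\mathbb{R}\setminus\{\text{nuts}\}$, and the local regularity established above is regularity of this global $P$ away from the nuts.

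**Expected main obstacle.** The genuinely delicate point is the claim, used implicitly via the third remark, that the twist potentials $\chi$ and the components of $\skew{7}{\tilde}{A}$ extend to \emph{bounded} (indeed analytic) functions up to the open part of the rod $r=0$; this rests on the analyticity-up-to-the-boundary hypothesis built into the standing assumptions on $(M,g)$ and on the structure of the orbit space from Proposition~3 of \cite{Hollands:2008fp}. Everything else is a matter of bookkeeping: matching the choice of $V'$ to the rod and quoting Propositions~\ref{prop:AnalyCont} and~\ref{prop:EquBdls}. I would therefore spend most of the write-up making the ``$J$ bounded, $\det\skew{7}{\tilde}{A}\neq 0$ on the open rod'' step airtight and treat the continuation/patching argument more briefly.
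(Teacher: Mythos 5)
Your proposal is correct and follows essentially the same route as the paper's own proof: identify $P(z)$ with the adapted Ernst potential $J'_i(0,z)$ on each open rod via Proposition~\ref{prop:AnalyCont}, use the invertibility of $\skew{7}{\tilde}{A}$ and boundedness of $J$ there to conclude regularity, and invoke Proposition~\ref{prop:EquBdls} to reconcile the different adaptations (the paper phrases this as a contradiction argument on a maximally extended bundle $\tilde E\to\mathcal{R}_{\tilde V}$ with other bundles obtained by pullback, but the substance is identical).
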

\begin{proof}
Suppose $P$ corresponds to the bundle $\tilde E \to  \mathcal{R}_{\tilde V}$, where $\tilde V$ is the maximally extended region over which the spheres can be 
identified. Suppose further that $P$ has a real singularity $\skew{2} \tilde a \in \mathbb{R}$ which is not one of the nuts, say without loss of generality 
$\skew{2} \tilde a\in (a_{i},a_{i+1})$. Then from Proposition~\ref{prop:EquBdls} we know that $\tilde E$ can on $V_{i}$ be expressed in standard form. But using 
Proposition~\ref{prop:AnalyCont} that means that $P(z)=J'_{i}(0,z)$ for $z\in (a_{i},a_{i+1})$. On the other hand we have seen earlier already that $J'_{i}$ 
is regular on $(a_{i},a_{i+1})$ and only becomes singular when approaching the nut. Contradiction!

For every other bundle $E→\mathcal{R}_{V}$ it must be that $V⊆\tilde V$, hence $E$ is the pullback of $\tilde E$ and as such the patching matrix of $E$ cannot have 
poles where the patching matrix of $\tilde E$ has not.
\end{proof}

As in \cite{Fletcher:1990aa} we sometimes also call the singularities of $P$ \textit{double points}, because the singularities of $P$ are the points where the two 
Riemann spheres of the reduced twistor space cannot be identified.

\begin{prop} \label{prop:simplepoles}
The real singularities of a patching matrix $P$ are simple poles.
\end{prop}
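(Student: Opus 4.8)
The plan is to read the order of the pole of $P$ off the formula \eqref{eq:highernst} for the Ernst potential and reduce the statement to a claim about the order of vanishing of $\det\skew{7}{\tilde}{A}$ at the nuts. By Corollary~\ref{cor:singofP} a real singularity of $P$ is one of the finite nuts $a_1,\ldots,a_N$, and by Proposition~\ref{prop:AnalyCont}, in a neighbourhood of such a nut $a_i$ one has $P(z)=J'(0,z)$, where $J'$ is the higher-dimensional Ernst potential \eqref{eq:highernst} adapted to one of the two rods abutting $a_i$. Under our standing assumptions the metric, hence the entries of $\skew{7}{\tilde}{A}$, is analytic up to and including $r=0$; the twist $1$-forms are built algebraically from the metric, the Killing fields and $\nabla$, so they too extend analytically to $r=0$, and so do the twist potentials $\chi$, being (anti)derivatives of a closed analytic $1$-form. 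Hence every entry of the matrix in \eqref{eq:highernst} other than through the scalar prefactor --- namely $1$, $\chi$, and $\det\skew{7}{\tilde}{A}\cdot\skew{7}{\tilde}{A}+\chi\chi^{\mathrm t}$ --- is analytic and bounded near $(0,a_i)$, so the entire singularity of $P$ at $a_i$ is carried by the factor $1/\det\skew{7}{\tilde}{A}$. Since $z\mapsto\det\skew{7}{\tilde}{A}(0,z)$ is real-analytic and, by the lemma on the adapted $\skew{7}{\tilde}{A}$, nonzero on the open rod, it has a zero of some finite order $m\geq 1$ at $a_i$, and $P$ has there a pole of exactly that order. It remains to show $m=1$.

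Next I would analyse the metric locally at the nut. A nut is a corner of the orbit space $\hat M$ at which two rods meet; at least one of them is a rotational axis (the horizon, being connected, abuts each nut at most once), and I adapt $J'$ to such a rod $(a_i,a_{i+1})$, whose rod vector $v_R$ I then delete from $J$ to form $\skew{7}{\tilde}{A}$. Writing $v_L$ for the rod vector of the other rod at $a_i$: on the open rod $\det\skew{7}{\tilde}{A}(0,z)\neq 0$, while as $z\to a_i$ the vector $v_L$ degenerates --- it vanishes identically on the neighbouring rod, or, if that rod is the horizon, the Killing field $K=\xi+\sum_j\Omega_jX_j$ becomes null there --- which forces $\det\skew{7}{\tilde}{A}(0,z)\to 0$. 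Two cases occur: (i) both rods at $a_i$ are rotational axes; (ii) $a_i$ is an endpoint of the horizon rod.

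In case (i) the degeneration is the regularity statement at a corner of $\hat M$ where two axes meet --- guaranteed by our exclusion of points with discrete isotropy: in suitable local coordinates $\hat M$ is a right-angled quadrant with $(a_i,a_{i+1})$ an edge, the squared norm of $v_L$ along that edge vanishes exactly to first order at the corner, and the off-diagonal entry $g(\xi,v_L)$ of $\skew{7}{\tilde}{A}$ vanishes to order $\geq 1$; hence $\det\skew{7}{\tilde}{A}(0,z)$ has a simple zero, so $m=1$. In case (ii) the analogue of ``the norm of $v_L$ vanishes to first order at the corner'' is precisely that the surface gravity of $\mathcal H$ is nonzero, i.e.\ that the horizon is non-degenerate --- this is where our standing non-degeneracy assumption enters, a degenerate horizon producing a second-order zero of $\det\skew{7}{\tilde}{A}$ and hence a double pole of $P$. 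The four-dimensional prototype is Example~\ref{ex:schwarz}: the adapted $\skew{7}{\tilde}{A}$ is the scalar $-f=-(z-m)/(z+m)$ on the axis rod $(m,\infty)$ and the scalar $r^2/f$, which at $r=0$ equals $4(m^2-z^2)$, on the horizon rod, and both have simple zeros at $z=m$, in accordance with $\kappa=1/(4m)\neq 0$. Granting the simplicity of the zero in both cases, $\det\skew{7}{\tilde}{A}(0,z)$ has a simple zero at every nut, so $P$ has there a simple pole.

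The crux, and the only non-formal step, is the order statement in case (ii): one must expand $J$ near the corners of $\hat M$ and, at the horizon endpoints, tie the order of the zero of $\det\skew{7}{\tilde}{A}(0,z)$ to the surface gravity, so that non-degeneracy yields a simple zero and excludes a double pole. One should also check that ``analytic up to and including $r=0$'' supplies the finite-order expansions required, with no logarithmic corrections, and treat the axis--axis and axis--horizon corners separately; the needed corner analysis is of the type carried out in \cite{Hollands:2008fp}, and in four dimensions in \cite{Fletcher:1990aa}.
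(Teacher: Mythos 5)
Your reduction of the statement to the order of vanishing of $\det\skew{7}{\tilde}{A}(0,z)$ at a nut is exactly the paper's first step: by Proposition~\ref{prop:AnalyCont} and \eqref{eq:highernst} the only singular ingredient of $P$ near a nut is the prefactor $1/\det\skew{7}{\tilde}{A}$, so the pole order of $P$ at $a_i$ equals the order of that zero. The problem is what comes after. The simplicity of the zero \emph{is} the proposition, and you do not establish it: in your case (ii) you explicitly defer it (``the crux\ldots\ one must expand $J$ near the corners''), and in case (i) you assert that the squared norm of $v_L$ ``vanishes exactly to first order at the corner'' with only a gesture at the absence of discrete isotropy --- a hypothesis that controls the normalization and periodicity of the rod vector, not by itself the $z$-order of vanishing of its norm. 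An argument that concludes ``granting the simplicity of the zero in both cases'' has granted the theorem.

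The paper closes this step by a different and much shorter route: from the corner structure of $\hat M$ recalled in Section~\ref{sec:rodstr} (the rank of $J(0,z)$ drops by exactly two at a nut, hence the rank of the adapted $\skew{7}{\tilde}{A}$ drops by exactly one at each limiting nut), and it identifies this rank statement with $\det\skew{7}{\tilde}{A}$ acquiring the factors $z-a_i$ and $z-a_{i+1}$ with multiplicity one --- precisely one eigenvalue vanishes, and it vanishes simply. If you want this to be airtight you do need slightly more than the bare rank statement (the rank of $\mathrm{diag}(z^2,1)$ also drops by one at $z=0$ while the determinant has a double zero), so your instinct that a local expansion at the corner --- separating axis--axis from axis--horizon nuts and using non-degeneracy of the horizon --- is the honest way to pin down the order is reasonable and your Schwarzschild check is correct; but you must then actually carry that expansion out, or else quote the rank-drop fact and argue the first-order vanishing from it as the paper does. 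As it stands the decisive step is missing.
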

\begin{proof}

We have seen above that on the real axis $r=0$ the singularities of $P$ are caused by the term $\det \skew{7}{\tilde}{A}$. So, consider the rod $(a_{i},a_{i+1})$ 
where $\skew{7}{\tilde}{A}$ has full rank. The determinant of a matrix equals the product of its eigenvalues. Furthermore, towards the nuts $a_{i}$, $a_{i+1}$ we know, 
also from above, that the rank of $\skew{7}{\tilde}{A}$ drops precisely by one which is the case if and only if $\det \skew{7}{\tilde}{A}$ contains the factors $z-a_{i}$ 
and $z-a_{i+1}$, respectively, with multiplicity one. 
\end{proof}

We have seen earlier that the various adaptations of the $P$-matrix represent equivalent bundles. A tool that will be very useful for the reconstruction of the space-time
 from the given data is to know more precisely how to obtain the $P$-matrix adapted to one part of the axis from other adaptions. The following theorem relates the adaptions 
on the outer rods for five-dimensional space-times.

\begin{thm} \label{thm:invpmatrix}
In five space-time dimensions, if $P_{+}$ is the patching matrix adapted to $(a_{\scriptscriptstyle N}, ∞)$, then $P_{-}^{\vphantom{1}}=MP_{+}^{-1}M$ with 
\begin{equation*}
M=\left(\begin{array}{ccc}0&0&1\\0&1&0\\1&0&0\end{array}\right)
\end{equation*}
is the patching matrix adapted to $(-∞,a_{1})$.
\end{thm}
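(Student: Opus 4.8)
The plan is to track how the two outer semi-infinite rods $(-\infty, a_1)$ and $(a_N,\infty)$ sit in the reduced twistor space and to exploit the asymptotic formula \eqref{eq:Jasympt} together with the asymptotic flatness normalisation $P(\infty)=1$. First I would recall that by Proposition~\ref{prop:AnalyCont} the patching matrix adapted to a rod is the analytic continuation of the boundary values $J'(0,z)$ of the Ernst potential adapted to that rod; so it suffices to understand the relation between $J'_+ := J'$ adapted to $(a_N,\infty)$ and $J'_- := J'$ adapted to $(-\infty,a_1)$ near $r=0$, and then continue. Both outer rods are rotational axes, so on $(a_N,\infty)$ some axial Killing vector $X_{K_+}$ vanishes and on $(-\infty,a_1)$ some axial Killing vector $X_{K_-}$ vanishes; in five dimensions $K_\pm\in\{1,2\}$, and the key geometric input is that for an asymptotically flat space-time the two outer axes are the two independent rotation planes, so $\{K_-,K_+\}=\{1,2\}$ (they are \emph{different} rotational Killing vectors). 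This is what will ultimately produce the permutation $M$.

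The core computation is then to compare the two adapted Ernst potentials \eqref{eq:ernstn5}. Write $J = (g(X_i,X_j))$ with rows/columns indexed by $0$ (timelike), $1$, $2$. For the adaptation to $(a_N,\infty)$ we delete row/column $K_+$, keep $\skew{7}{\tilde A}_+$ (the surviving $2\times2$ block) and use the twist potentials $\chi_+$ obtained by omitting $X_{K_+}$; similarly for $(-\infty,a_1)$ with $K_-$. I would show, using the B\"acklund-transform characterisation (the Theorem stating that $J'$ is the B\"acklund transform of $J$ with $A=-r^{-2}\det\skew{7}{\tilde A}$ and $B'=-\skew{4}{\tilde B}'=\chi$) that the two adaptations correspond to two different block decompositions \eqref{eq:bdecomp} of the \emph{same} $J$ — differing only in which coordinate is singled out as the ``$k=1$'' block. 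Conjugating one block decomposition into the other is exactly conjugation by the permutation matrix that swaps the two axial coordinates while fixing the relabelling; combined with the inversion built into the B\"acklund transformation $A'=\skew{7}{\tilde A}^{-1}$, $\skew{7}{\tilde A}' = r^{-2}A^{-1}$ (Proposition~\ref{prop:bdet} and the definition preceding it), this yields $J'_- = M\,(J'_+)^{-1}\,M$ at the level of boundary values, where $M$ is precisely the stated anti-diagonal matrix (the $1$'s in the corners swap the two axial slots, the central $1$ fixes the timelike slot). One checks directly that $M^2 = \mathrm{id}$ and $\det M = -1$, consistent with $\det J'_\pm = 1$.

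Finally I would promote this identity from the axis to a neighbourhood by analytic continuation: both $P_+$ and $P_-$ are holomorphic in $w$ on (a choice of) $V'$, both agree with the corresponding $J'(0,z)$ for real $z$ in their respective rods by Proposition~\ref{prop:AnalyCont}, and $w\mapsto M P_+(w)^{-1} M$ is holomorphic wherever $P_+$ is invertible (which, since $\det P_+ = 1$, is everywhere $P_+$ is holomorphic); two holomorphic matrix functions agreeing on a real interval agree on the connected common domain. A small point to handle carefully is the normalisation/ambiguity $P\mapsto A P B$ with constant $A,B$ and the choice of which roots $\zeta_i^0,\zeta_i^1$ are sent to $S_0,S_1$; I would fix this using $P(\infty)=1$ (asymptotic flatness) and the convention $\zeta_i^0\to0$, $\zeta_i^1\to\infty$ as $r\to\infty$ from Section~\ref{sec:bundles}, which pins down both adaptations compatibly and makes $M$ (rather than some other conjugate permutation) the correct intertwiner.

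The main obstacle I expect is precisely the bookkeeping in this last paragraph together with the geometric claim that the two outer rod vectors are the two \emph{distinct} axial generators: one must invoke asymptotic flatness (the asymptotic $S^3$ and its $\mathrm{U}(1)^2$ rotations) to rule out the degenerate case where both outer rods carry the same rotational Killing vector, since otherwise the relation would involve a different (non-permutation, or trivial) matrix. Verifying that the B\"acklund data of the two adaptations are related exactly by conjugation by $M$ — as opposed to $M$ up to a further constant gauge — is the technical heart, and is where the explicit Laplace-expansion identity $A = -r^{-2}\det\skew{7}{\tilde A}$ from the B\"acklund Theorem, applied to \emph{both} choices of omitted Killing vector, does the work.
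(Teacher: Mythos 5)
There is a genuine gap at the step you yourself identify as ``the technical heart.'' You propose to establish $J'_{-}=M\,(J'_{+})^{-1}M$ by a direct comparison of the two block decompositions \eqref{eq:bdecomp}, but this is not a pointwise algebraic identity that falls out of the B\"acklund formulas: $J'_{+}$ and $J'_{-}$ are built from \emph{different} sets of twist potentials $\chi_{+}$ and $\chi_{-}$, defined by integrating different twist one-forms (each determined only up to additive constants), and no relation between $\chi_{+}$ and $\chi_{-}$ is available from the decomposition alone. Computing $(J'_{+})^{-1}$ explicitly (its lower-right block is $\tilde{A}_{+}^{-1}$, its $(1,1)$ entry is $\det\tilde{A}_{+}+\chi_{+}^{\mathrm{t}}\tilde{A}_{+}^{-1}\chi_{+}$, etc.) shows that the claimed identity encodes nontrivial integral relations between the two adaptations --- essentially the content of the theorem --- so asserting it as a ``core computation'' begs the question. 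A second, related problem is the analytic-continuation step: by Proposition~\ref{prop:AnalyCont} the boundary value $J'_{+}(0,z)=P_{+}(z)$ holds only for $z\in(a_{N},\infty)$ and $J'_{-}(0,z)=P_{-}(z)$ only for $z\in(-\infty,a_{1})$; these intervals are disjoint, so there is no common real interval on which the two holomorphic matrices can be compared, and your ``agree on a real interval, hence agree everywhere'' argument does not close. One would have to prove the identity for $r>0$, or continue one of the two functions across all the intervening nuts, neither of which your outline supplies.

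The paper avoids all of this by arguing at the level of the twistor bundle rather than the Ernst potentials. Adaptation to a rod is a choice of assignment of the roots $\zeta_{i}^{\pm}$ of \eqref{eq:rootswi} to the two spheres; passing from the adaptation to $(a_{N},\infty)$ to the adaptation to $(-\infty,a_{1})$ swaps this assignment at \emph{every} finite double point, which is equivalent to swapping the single double point at infinity, i.e.\ to relabelling $S_{0}\leftrightarrow S_{1}$. Since the splitting procedure defines $J$ as a map $E_{\infty_{0}}\to E_{\infty_{1}}$, interchanging the spheres replaces the patching data by its inverse; the conjugation by $M$ then merely restores the standard form, because inversion swaps the asymptotic behaviour of the first and third eigenvalues of \eqref{eq:ernstn5}. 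Note in particular that $M$ does \emph{not} arise, as you suggest, from the two outer rod vectors being distinct axial generators --- it comes from reordering the splitting-type integers after inversion. If you want to salvage your route, the missing ingredient is precisely this bundle-level statement (inverting the patching matrix corresponds to swapping $S_{0}$ and $S_{1}$), at which point the Ernst-potential computation becomes unnecessary.
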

\begin{proof}
Remember that the map $π$ in Section~\ref{sec:bundles} is only well-defined if one specifies the assignment of the poles to the spheres. The roots of the double points 
$a_{i}$ satisfy
\begin{equation*}
rζ_{i}^{2} + 2(a_{i}^{\vphantom{2}}-z)ζ_{i}^{\vphantom{2}} -r =0,
\end{equation*}
so they are
\begin{equation} \label{eq:rootsdoublepoints}
ζ_{i}^{±} = \frac{1}{r} \left((a_{i}^{\vphantom{2}}-z)±\sqrt{(z-a_{i}^{\vphantom{2}})^{2}+r^{2}}\right).
\end{equation}
We note two things. First, the spheres are labelled by saying that the roots of $w=∞$, namely $ζ=0$ and $ζ=∞$, are mapped to $π(0)=∞_{0}∈S_{0}$ and $π(∞)=∞_{1}∈S_{1}$. 
Second, $r$ and $z$ are chosen as parameters in the very beginning, but the obtained expressions depend smoothly on $r$ and $z$ so that we can vary them and follow the 
consequences. One observation of this kind is that for $r→0$ one of the roots in \eqref{eq:rootsdoublepoints} tends to zero and one to infinity. 

Hence, given a solution $J$ of Yang's equation~\eqref{eq:redyang} and the corresponding bundle $E→\mathcal{R}_{U}$, $U=\mathbb{C}\mathbb{P}^{1}\backslash \{∞,a_{1},…,a_{\scriptscriptstyle N}\}$ the region where the spheres are identified and $U$ not simply connected, then the description of the twistor space as 
$S_{0}^{\vphantom{2}}∪S_{1^{\vphantom{2}}}$ and the patching matrix $P$ are adapted to the component $\mathcal{C}$ of the real axis if those $ζ_{i}^{±}$ that tend to 
zero for $r→0$ on $\mathcal{C}$ are assigned to $S_{0}^{\vphantom{2}}$ and those that tend to infinity are assigned to $S_{1}^{\vphantom{2}}$; see also \cite[Prop.~3.2]{Fletcher:1990aa}. This is merely a requirement of consistent behaviour under the variation of $r$ and $z$, since $π(0)∈S_{0}^{\vphantom{2}}$ and 
$π(∞)∈S_{1}^{\vphantom{2}}$. Note that in this case on $\mathcal{C}$ it is $P(z)=J'(z)$. 

More explicitly this can be stated as
\begin{equation*}
\renewcommand{\arraystretch}{1.3}
ζ_{i}^{+} → \left\{ \begin{array}{ll}
0, & i≤k \\
∞, & i>k
\end{array}\right. \quad \text{and} \quad
ζ_{i}^{-} → \left\{ \begin{array}{ll}
∞, & i≤k \\
0, & i>k
\end{array}\right. ,
\end{equation*}
for an adaptation to $\mathcal{C}=(a_{k},a_{k+1})$ and for $r→0$ on $z∈(a_{k},a_{k+1})$.

This allows us to draw the conclusion that for a given bundle a change of adaptation from $(a_{k},a_{k+1})$ to $(a_{k-1},a_{k})$ is achieved by swapping the 
assignment of $π(ζ_{k}^{±})$ to the spheres; see \cite[Sec.~3.2]{Fletcher:1990aa}. Following this idea, one then obtains the adaptation to $(-∞,a_{1})$  from 
an adaptation to $(a_{\scriptscriptstyle N}, ∞)$ by swapping all double points $π(ζ_{k}^{±})$, $1≤k≤N$. However, the latter is the same as swapping the double 
point at infinity which means relabelling the spheres by saying $π(0)∈S_{1}^{\vphantom{2}}$ and $π(∞)∈S_{0}^{\vphantom{2}}$.

Now let us step back from this line of ideas and have a look from another side. Note that if $J$ is a solution of Yang's equation~\eqref{eq:redyang} with 
$\det J =1$ and $J$ is symmetric, $J=J^{\mathrm{t}}$, then $J^{-1}$ is a solution of Yang's equation as well with $\det J^{-1}=1$. On the other hand, just 
by inspection of the splitting procedure one notices that $J$ is defined as a linear map $J:E_{∞_{0}}→E_{∞_{1}}$, where $E_{w}$ is the fiber of $E→\mathcal{R}$ 
over $w∈\mathcal{R}$, and that $J^{-1}$ is the solution of Yang's equation generated by the bundle where the spheres are swapped, $S_{0}↔S_{1}$ (see also property 
(3) in \cite[Sec.~4]{Woodhouse:1988ek}). But this is precisely what we have done above.

The last point to note is that even though we have shown that $P^{-1}$ is adapted to $(-∞,a_{1})$ it does not necessarily have to be in our standard form due to the 
gauge freedom in the splitting procedure. Analyzing the eigenvalues of \eqref{eq:ernstn5} we see that asymptotically, that is for $r=0$ and $z→\pm ∞$, the first one 
becomes infinite, the second one is bounded and the third one goes to zero. Taking the inverse of $J'$ adapted to the top rod swaps the behaviour of the first and 
third eigenvalue, thus $P^{-1}$ is brought in standard form by swapping the first and third row and column. This is implemented by the conjugation with $M$, which 
completes the proof.
\end{proof}
\begin{rem}
The last point in the proof above about bringing $P^{-1}$ into standard form will be more obvious when we have seen more details about the asymptotic behaviour in the 
second part of this paper.
\end{rem}
\begin{cor} \label{cor:invpmatrix}
In five space-time dimensions, if $P_{+}$ is the patching matrix adapted to $(a_{\scriptscriptstyle N}, ∞)$, then ${\Delta} \coloneqq\prod _{i=1}^{N}(z-a_{i})$ 
divides all $2×2$-minors of $Δ⋅P_{+}^{\vphantom{-1}}=P'_{+}$.
\end{cor}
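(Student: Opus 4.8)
The plan is to read off the divisibility from the structure of $P_{+}^{-1}$ established in Theorem~\ref{thm:invpmatrix}. By that theorem, $P_{-}=MP_{+}^{-1}M$ is the patching matrix adapted to $(-\infty,a_{1})$, and by Corollary~\ref{cor:singofP} together with Proposition~\ref{prop:simplepoles}, $P_{-}$ is holomorphic on $V'$ with at worst simple poles at the nuts $a_{1},\ldots,a_{N}$ on the real axis. Since conjugation by the fixed permutation matrix $M$ neither creates nor removes poles, $P_{+}^{-1}$ has the same property: its only real singularities are simple poles at $a_{1},\ldots,a_{N}$. Hence $\Delta\cdot P_{+}^{-1}$ is (entrywise) regular at each $a_{i}$, i.e. holomorphic on a neighbourhood of the real axis inside $V'$.

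Next I would use the cofactor formula. Since $\det P_{+}=1$ (the Ernst potential has unit determinant, and $P_{+}(z)=J'(z)$ on the rod by Proposition~\ref{prop:AnalyCont}), we have $P_{+}^{-1}=\operatorname{adj}(P_{+})$, whose entries are exactly the $2\times2$ minors (cofactors) of $P_{+}$. Therefore the entries of $\Delta\cdot P_{+}^{-1}=\Delta\cdot\operatorname{adj}(P_{+})$ are precisely $\Delta$ times the $2\times2$ minors of $P_{+}$, and we have just argued these products are holomorphic at each $a_{i}$. Equivalently, writing $P'_{+}=\Delta\cdot P_{+}$, the $2\times2$ minors of $P'_{+}$ are $\Delta^{2}$ times the $2\times2$ minors of $P_{+}$, so they equal $\Delta\cdot\big(\Delta\cdot(\text{minor of }P_{+})\big)$, an honest multiple of $\Delta$ by the previous sentence. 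Since each minor of $P_{+}$ is a rational function whose only real poles lie among the $a_{i}$ and are simple, multiplication by $\Delta=\prod_{i}(z-a_{i})$ clears them, giving a function regular on the real axis; and $\Delta$ divides $\Delta^{2}\cdot(\text{minor of }P_{+})$ as a rational function precisely when this holds, which is what the statement asserts.

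The one place that needs care — and what I expect to be the main obstacle — is making precise the sense in which "$\Delta$ divides the minors of $P'_{+}$": the entries of $P'_{+}$ are polynomials in $z$ only if $P_{+}$ has no poles off the real axis and no essential behaviour at $\infty$, so I would first note (using Corollary~\ref{cor:singofP}, which forbids poles with $r\neq0$, and the asymptotic normalization $P(\infty)=1$-type behaviour recalled in Section~\ref{sec:bundles}) that $\Delta\cdot P_{+}$ indeed has polynomial entries, so that "divides" is divisibility in $\mathbb{C}[z]$. Granting that, the argument is just: a $2\times2$ minor of $P'_{+}=\Delta P_{+}$ is $\Delta^{2}$ times a cofactor of $P_{+}$, i.e. $\Delta^{2}$ times an entry of $P_{+}^{-1}=MP_{-}M$; since $P_{-}=P'_{-}/\Delta$ with $P'_{-}$ polynomial (same argument applied to the bottom rod), each such entry is a polynomial divided by $\Delta$, so $\Delta^{2}$ times it is $\Delta$ times a polynomial. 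This is exactly divisibility of every $2\times2$-minor of $P'_{+}$ by $\Delta$.
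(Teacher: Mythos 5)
Your proposal is correct and follows essentially the same route as the paper: use Theorem~\ref{thm:invpmatrix} together with Proposition~\ref{prop:simplepoles} to conclude that $P_{+}^{-1}$ has at most simple poles at the nuts, then observe via the adjugate formula (with $\det P_{+}=1$) that its entries are the $2\times2$-minors of $\Delta\cdot P_{+}$ divided by $\Delta^{2}$, so one factor of $\Delta$ must cancel. Your additional remarks on pole-invariance under conjugation by $M$ and on the polynomiality of $\Delta\cdot P_{+}$ only make explicit what the paper leaves implicit.
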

\begin{proof}
From Theorem~\ref{thm:invpmatrix} and Proposition~\ref{prop:simplepoles} it follows that $P_{+}^{-1}$ has at most simple poles at the nuts. But by the general formula 
for the inverse of a matrix the entries of $P_{+}^{-1}$ are (up to a sign) $P_{+}^{i,j}/Δ^{2}$, where $P_{+}^{i,j}$ is the $2×2$-minor of $Δ⋅P_{+}$ obtained by cancelling 
the $i^{\mathrm{th}}$ row and $j^{\mathrm{th}}$ column. Hence one factor of $Δ$ has to cancel.
\end{proof}
\begin{rem}
Taking the Ernst potential \eqref{eq:highernst} in five dimensions and writing it in the following way
\begin{equation*}
\renewcommand{\arraystretch}{1.4}
J'_{+}(z)=\left(\begin{array}{cc} \hphantom{-}g & -gχ^{\mathrm{t}} \\-gχ & \skew{7}{\tilde}{A} + gχχ^{\mathrm{t}}\end{array}\right)
 = \frac{1}{Δ}\left(\begin{array}{cc}p_{0} & \vec{p}^{\,\mathrm{t}} \\ \vec{p} & \mathbb{P} \end{array}\right),
\end{equation*}
the matrix of metric coefficients $\skew{7}{\tilde}{A}$ as a function of $z$ is obtained as
\begin{equation}\label{eq:metricfromP}
\skew{7}{\tilde}{A} = \frac{1}{Δ} \mathbb{P} - \frac{1}{Δp_{0}}\vec{p}\cdot \vec{p}^{\,\mathrm{t}} = \frac{1}{Δp_{0}}\left(p_{0}\mathbb{P}-\vec{p}\cdot \vec{p}^{\,\mathrm{t}}\right).
\end{equation}
All entries of $p_{0}\mathbb{P}-\vec{p}\cdot \vec{p}^{\,\mathrm{t}}$ are $2×2$-minors of $Δ⋅J'$, hence $Δ$ divides them. Thus $\skew{7}{\tilde}{A}= \tilde{\mathbb{P}}/ p_{0}$ 
where the entries of $\tilde{\mathbb{P}}$ are polynomials in $z$. We remember from above that $p_{0}/Δ$ blows up when we approach $a_{N}$, that means $p_{0}$ cannot have a factor 
$(z-a_{N})$. So, the entries of $\skew{7}{\tilde}{A}$ are bounded as $z↓a_{N}$, a feature which is consistent with our picture of space-time.

Note, however, that we cannot extend that to other nuts without changing the adaptation, that is to say, the expression for a metric coefficient $J_{ij}(r=0,z)$, $z>a_{N}$, 
might contain poles for $z<a_{N}$. An example (which will be studied in more detail in Part~II of this article) is the black ring, where
\begin{align*}
\renewcommand{\arraystretch}{2.5}
J_{22}(z) = \left\{\begin{array}{cl}\dfrac{2(z-{\kappa}^{2})(z+c{\kappa}^{2})}{z-c{\kappa}^{2}} & ,\ z>{\kappa}^{2}, \\0 & ,\ c{\kappa}^{2}<z<{\kappa}^{2}, \\ \dfrac{2(z-c{\kappa}^{2})(z+c{\kappa}^{2})}{z-{\kappa}^{2}} & ,\ -c{\kappa}^{2}<z<c{\kappa}^{2}, \\0 & ,\ z<-c{\kappa}^{2},\end{array}\right.
\end{align*}
with $c$ and $κ$ parameters of the solution. The denominators vanish for certain values of $z$, but these are not singularities of the metric since they are 
not in the domain of the respective expression. Given this observation, it will be important not to impose too strong conditions when deducing the patching matrices in Part II of this article, 
because when we try to fix the free parameters we cannot take \eqref{eq:metricfromP} and say because the metric is regular the denominator has to divide the numerator up to 
a constant.

Note further that even though Theorem~\ref{thm:invpmatrix} and Corollary~\ref{cor:invpmatrix} generalize directly to $n$ dimensions, the conclusion for the metric 
coefficients in $\skew{7}{\tilde}{A}$ does not. This is because in higher dimensions the entries of $\skew{7}{\tilde}{A}$ will still consist of certain $2×2$-minors of 
$P_{+}^{\vphantom{1}}$ as in \eqref{eq:metricfromP}, whereas $P_{+}^{-1}$ being a patching matrix requires $Δ^{n-4}$ to divide the $(n-3)×(n-3)$-minors of 
$P_{+}^{\vphantom{1}}$. This coincides in five dimensions, but is not implied automatically for dimensions greater than five. Yet, the boundedness of the 
metric coefficients ought to hold always, so that it at most gives extra conditions on the free parameters.\\ \mbox{} \hfill $\blacksquare$
\end{rem}

\section{Summary and Outlook}
In this work we have reviewed the existing twistor construction for stationary, axisymmetric and asymptotically flat solutions of Einstein's equations in 
four dimensions and identified the points which inhibit the immediate generalization to higher dimensions. Of crucial importance was the Ernst potential, 
whose definition had to be extended to higher dimensions and it turned out that this is done by a B\"acklund transformation.

For the classification of five-dimensional black holes the rod structure plays a significant role. So, having obtained a generalized Ernst potential, we 
explained in detail how the rod structure and twistor data of a space-time correspond to each other. In a continuation of this work we will study, how 
these results can be used to reconstruct space-time metrics via the patching matrix from the given data, angular momenta and rod structure. For this 
reconstruction it is necessary to understand how the patching matrix changes when moving to an adjacent rod. A first step was done in Theorem~\ref{thm:invpmatrix} 
and Corollary~\ref{cor:invpmatrix}, which relate the patching matrices adapted to the outer intervals of the rod structure.

\hypersetup{
	bookmarksdepth=0
}

\section*{Acknowledgements}
The author would like to thank Paul Tod for fruitful discussions and his advice on this work. Furthermore, he is grateful to Lionel Mason, Piotr Chru\'{s}ciel and Nicholas Woodhouse for sharing their thoughts on some ideas. The author was supported by a PhD studentship from the German National Academic Foundation (Studienstiftung des deutschen Volkes), a Lamb and Flag Scholarship from St\,John's College Oxford and the EPSRC studentship MATH0809.

\nocite{Ward:1983yg}
\bibliographystyle{jphysicsB}						
\bibliography{/Users/norman/mathematics/Papers/library} 							
\end{document}